\def\BibTeX{{\rm B\kern-.05em{\sc i\kern-.025em b}\kern-.08em
    T\kern-.1667em\lower.7ex\hbox{E}\kern-.125emX}}
\theoremstyle{definition}
\newtheorem{definition}{Definition}
\newtheorem{assumption}{Assumption}
\newtheorem{proposition}{Proposition}
\newtheorem{remark}{Remark}
\begin{document}

\title{Dimension-reduced Optimization of Multi-zone Thermostatically Controlled Loads}

\author{Xueyuan Cui, \IEEEmembership{Student Member, IEEE,} Yi Wang, \IEEEmembership{Senior Member, IEEE,} and Bolun Xu, \IEEEmembership{Member, IEEE}
}

\markboth{Submitted to IEEE Trans. Smart Grid}%
{Shell \MakeLowercase{\textit{et al.}}: A Sample Article Using IEEEtran.cls for IEEE Journals}

\maketitle

\begin{abstract}
This study proposes a computationally efficient method for optimizing multi-zone thermostatically controlled loads (TCLs) by leveraging dimensionality reduction through an auto-encoder. We develop a multi-task learning framework to jointly represent latent variables and formulate a state-space model based on observed TCL operation data. This significantly reduces the dimensionality of TCL variables and states while preserving critical nonlinear interdependencies in TCL control. To address various application scenarios, we introduce optimization algorithms based on system identification (OptIden) and system simulation (OptSim) tailored to the latent variable representation. These approaches employ automatic differentiation and zeroth-order techniques, respectively, for efficient implementation. We evaluate the proposed method using a 90-zone apartment prototype, comparing its performance to traditional high-dimensional optimization. Results demonstrate that our approach effectively reduces control costs while achieving significantly higher computational efficiency.

\end{abstract}

\begin{IEEEkeywords}
Building energy system, thermostatically controlled loads, dimension reduction, auto-encoder.
\end{IEEEkeywords}

\section{Introduction}\label{sec_Intro}
Thermostatically controlled loads (TCLs) account for approximately 40\% of total building energy consumption~\cite{0-1-energy1} and serve as key flexible resources for demand response by leveraging thermal inertia to regulate temperature~\cite{0-1-TCL}. However, optimizing TCLs in multi-zone buildings remains a significant challenge. Each zone within a building represents a distinct physical space equipped with a thermostat-controlled Heating, Ventilation, and Air Conditioning (HVAC) system to maintain thermal comfort~\cite{0-zone}. As the number of zones increases, the dimensionality of variables—such as indoor temperature, HVAC power, and external disturbances—grows, demanding greater computational resources for optimization. Additionally, multi-zone optimization requires modeling heat conduction dynamics to capture inter-zonal dependencies. These higher-order thermal couplings become increasingly complex as the number of zones expands, further complicating the quantification of thermal dynamics~\cite{0-2-thermal}.

Current research primarily focuses on advanced algorithms that rely on the original high-dimensional variables, but these methods often suffer from limitations in computational efficiency, model accuracy, and economic feasibility. Rule-based control strategies~\cite{0-rule} use empirical temperature curves to make decisions, offering simplicity but lacking the flexibility to adapt to varying thermal comfort requirements. Model predictive control (MPC)~\cite{0-2-thermal} incorporates more sophisticated control objectives but depends on accurate TCL dynamic models often unavailable to building operators. Meanwhile, model-free approaches such as reinforcement learning (RL)\cite{0-2-type-commerical-RL} can generate precise control decisions by training on large-scale temperature control scenarios, but such extensive training data is rarely practical to obtain in real-world buildings.

Recent research on dimensionality reduction has provided valuable insights for addressing high-dimensional optimization challenges. Dimensionality reduction involves transforming data from a high-dimensional space into a lower-dimensional latent space while preserving the key statistical properties of the original features~\cite{latent-model}. This transformation effectively reduces the complexity of the optimization problem. Some studies have explored dimension reduction techniques, such as model order reduction~\cite{1-1-fast,1-1-2012,1-1-other-review} and model aggregation~\cite{1-2-lump,1-2-Lu1,1-2-Lu2,1-2-aggregation}, to simplify high-dimensional optimization. However, the model order reduction typically assumes the availability of an accurate thermal dynamics model, which could be impractical for multi-zone buildings. \textcolor{black}{Model aggregation techniques achieve a balance between modeling accuracy and interoperability after obtaining the equivalent single-zone model, but there remains a lack of clarity on how to recover control decisions for temperature regulation from the reduced-dimensional solutions.}

\textit{How to formulate latent variables and utilize them to solve the optimization of multi-zone TCLs efficiently and accurately?} In this work, we answer this question from a data-driven perspective, introducing latent variable representations and dimension-reduced optimization. We propose optimization algorithms with both identification and simulation models 
to address different application scenarios, achieving higher computational efficiency and improved solution accuracy compared to using the original variables. The key contributions are as follows:
\begin{enumerate}
\item We develop a multi-task learning framework using auto-encoders (AEs) to learn latent states, actions, and disturbances. \textcolor{black}{A black-box model is integrated into the AEs to accurately capture time-coupled relationships among latent variables, effectively reducing dimensionality while ensuring the recoverability of original control signals.}

\item We propose optimization algorithms based on system identification (\textbf{OptIden}) and simulation (\textbf{OptSim}).
The OptIden reformulates the optimization into neural network (NN)-based structures, enabling gradient computation via automatic differentiation. The OptSim algorithm employs zeroth-order optimization to mitigate the effects of inaccuracies within the identification model.

\item We analytically derive the errors in latent variable representation and optimization results. We show, using empirical demonstrations, that the OptSim algorithm can achieve higher accuracy.
\end{enumerate}

The rest of this paper is organized as follows: Section II presents the literature review. Section III formulates the multi-zone TCLs optimization. Section IV provides the solution method. Section V analyzes the underlying errors within the method. Section VI conducts case studies and Section VII draws the conclusions.

\section{Literature Review}
\textcolor{black}{When participating in demand response, TCLs play an important role in providing various kinds of grid services, including peak shaving, frequency regulation, voltage regulation, and capacity reservation.
For example, HVAC systems in large residential communities were transferred into an equivalent energy storage model, providing substantial potential for reducing ramping rate and peak power by participating in a demand response scheme~\cite{new-2}.
Residential air conditioners were aggregated through a cyber-physical system to enable real-time control~\cite{new-1}, achieving frequency regulation services with minimal impact on compressor degradation and user comfort. Moreover, aggregate TCL models were integrated into microgrid operation~\cite{new-3}, enabling significant daily cost savings while meeting frequency regulation security levels.
Besides frequency regulation, TCLs were incorporated into a hierarchical energy management system for voltage control of a microgrid~\cite{new-5}. A complete set of simulations showed the effectiveness of the participation of TCLs in the frequency and voltage regulation.
As analyzed in~\cite{new-4}, TCLs could also be regarded as a contingency-type reserve resource by leveraging the statistical bounds on their exploitable flexibility.}

Extensive work has been done for the optimization of multi-zone TCLs with original variables. It can be roughly categorized into model-based and model-free methods. In model-based methods, thermal dynamics are quantified as a state-space model. Then, various optimization techniques are designed based on model structures. In~\cite{0-2-type-office}, the authors trained a neural network (NN) based thermal dynamics model with collected data in an office building. They transformed the NN forward process into piecewise linear constraints in the optimization problem. Optimal decisions were determined by solving the mixed integer linear programming (MILP). Authors in~\cite{0-2-type-commerical} considered the multi-source uncertainties in commercial buildings, and solved a real-time optimization problem based on Lyapunov optimization techniques. While much research assumed a linear resistor-capacitor (RC) based thermal dynamics model, the accuracy applied to multi-zone buildings has not been verified and guaranteed.
Due to difficulties in identifying thermal dynamics, some research conducts model-free learning and makes optimal decisions with measured data directly. In~\cite{0-2-type-commerical-RL}, the authors reformulated the optimization of multi-zone TCLs as a Markov game, and a multi-agent deep reinforcement learning algorithm was formulated with an attention mechanism. The model-free algorithm does not require prior knowledge of thermal dynamics. On the contrary, it trains a smart agent to learn from massive scenarios for temperature control. For multi-zone buildings, the required scenarios rise with the increasing zone number and are hard to access, thus making it hard to train a well-performing agent.

Instead of utilizing original and high-dimensional variables, some methods have been proposed to reduce the dimension of variables in the optimization. A summary of related work on dimension reduction is presented in Table~\ref{tab-review}. They are generally conducted in two ways: reducing the model complexity and improving the solving algorithm. Two kinds of methods have been proposed to reduce the model complexity, called order reduction and model aggregation. For order reduction, the high-order RC-based model~\cite{1-1-RCmodel} in multi-zone buildings is reduced into simpler structures through various techniques in control theory. For example, the balanced truncation method combined with controllability and observability analysis was applied in~\cite{1-1-fast} to obtain a high-fidelity order-reduced model. The authors in~\cite{1-1-2012} exploited the sparsity pattern of the nonlinear portion in the original model to build the simplified model. Other techniques (e.g., Kalman filter, Lyapunov balancing, and proper orthogonal decomposition) were also explored for model order reduction~\cite{1-1-other-review}. These methods can guarantee the theoretical accuracy of order-reduced models in arbitrary dimensions. However, a known full-order thermal dynamics model is required in advance, which is impractical in the actual application based on our prior analysis.

The second kind of model aggregation method is to obtain an equivalent single-zone model with aggregate variables. The model structure with aggregate variables is determined and model parameters are then identified. The authors in~\cite{1-2-lump} estimated a lumped model by weighted averaging of zone volume from a reference model. In~\cite{1-2-Lu1,1-2-Lu2}, a data-driven approach was proposed to aggregate thermal dynamics into a single-zone model. Aggregate variables were generated based on physical laws, and all model parameters were identified with auto-regression. In~\cite{1-2-aggregation}, a principled method based on the physics-informed Kalman filter was proposed to generate the single-zone model, where uncertain disturbances were also quantified. \textcolor{black}{Model aggregation methods can generate physically meaningful variables to improve the reliability of decisions, and they can balance between modeling accuracy and interoperability by adjusting orders of equivalent models. However, as shown in~\cite{1-2-lump}, the single-zone decisions after optimization were assumed as global control signals for all zones. How to disaggregate the equivalent single-zone signal to each zone remains unsolved in model aggregation-related techniques.}

Besides the model-related methods, other existing research conducts variable reduction with new solution algorithms. For example, the authors in~\cite{1-3-hierarch} proposed a hierarchical optimization algorithm where the whole variables were divided according to facility and zone levels; authors in~\cite{1-3-distributed} proposed a distributed optimization algorithm to transform the original problem into subproblems for zones and coupled areas.

\textcolor{black}{In light of the above context, we conclude two research gaps: 1) existing approaches on variable dimension reduction assume a known thermal dynamics model for reference, which could be unrealistic to be acquired for multi-zone buildings; 2) after obtaining reduced models/variables, there is little research on how to optimize TCLs with them, especially on how to recover the decisions of original variables for temperature control.}

\begin{table}[t]
\caption{Literature Comparisons}
\centering
\resizebox{0.9\linewidth}{!}{\begin{tabular}{c|cc|cc|c}
\hline
\multirow{2}{*}{Ref.} & \multicolumn{2}{c|}{Contribution} & \multicolumn{2}{c|}{Performance} & \multirow{2}{*}{Technique} \\ \cline{2-5}
                     & Model        & Optimization       & Feasibility     & Efficiency     &                            \\ \hline
\cite{1-1-fast,1-1-2012,1-1-other-review}                    &  \checkmark            &                    &                 &  \checkmark              & Order reduction            \\
\cite{1-2-lump,1-2-Lu1,1-2-Lu2,1-2-aggregation}                    &   \checkmark           &                    &                 &   \checkmark             & Aggregation             \\
\cite{1-3-hierarch}                    &              &     \checkmark               &   \checkmark              &                & Hierarchical                       \\
\cite{1-3-distributed}                  &              &   \checkmark                 &   \checkmark              &                & Distributed                \\ \hline
\textit{Proposed}                    &   \checkmark           &     \checkmark               &     \checkmark            &    \checkmark            &      Latent variables                      \\ \hline
\end{tabular}}
\label{tab-review}
\end{table}

\section{Problem Statement}\label{sec_Pro}
We present the TCL scheduling optimization for temperature control in multi-zone buildings. We consider a discrete time horizon $t\in\mathcal{T}$. We denote variables of all building zones in vector form per time step $v_t = [s_t, a_t, \delta_t]^\intercal$, where $s_t$ is the state (e.g., indoor temperature), $a_t$ is the action (e.g., HVAC electric power), and $\delta$ is the disturbances (e.g., outdoor temperature, solar radiation, and occupancy). The optimization is formulated as
\begin{subequations}\label{ori_pro}
\begin{align}
\min\limits_{a_t} & \sum\limits_{t \in {\cal T}} { \lambda_t \cdot {1}^\intercal a_{t} + P_{t} \cdot y_{t}^\intercal y_{t} } \label{ori_1} \\
\text{s.t.} ~ &{s_{t + 1}} = \mathcal{F}({s_t},a_t,{\delta_t}),~\forall t \label{ori_2}\\
&\underline{S}_t - y_t \le s_t \le \overline{S}_t + y_t,\, y_t \ge 0 ~\forall t \label{ori_4}\\
&\underline{A}_{t} \le a_{t} \le \overline{A}_{t},~\forall t \label{ori_6}
\end{align}
\end{subequations}
where in~\eqref{ori_1}, $\mathcal{T}$ is the optimization period; $\lambda_t$ and $P_t$ are the electricity price and penalty factor for temperature violation at each time $t$, respectively; $y_t$ is the vector of temperature violations; in~\eqref{ori_2}, $\mathcal{F}$ denotes the discrete-time state space model to represent the change of temperature; 
constraints in \eqref{ori_4}-\eqref{ori_6} set limits for the state and decision variables with bound values $\underline{S}_t,\overline{S}_t$ and $\underline{A}_t,\overline{A}_t$.

When applying~\eqref{ori_pro} for optimizing TCLs in multi-zone buildings, dimensions of $s,a$ and $\delta$ could increase linearly with the zone number. The high-dimensional variables in~\eqref{ori_pro} cause the following two issues:
\begin{enumerate}
\item Model error: The optimal decisions for temperature control in all zones are solved simultaneously when considering the coupled thermal dynamics~\cite{1-2-Lu1}. Given that case, identifying an accurate model of $\mathcal{F}$ is hard when capturing coupled and nonlinear factors from cross-zone and high-dimensional inputs~\cite{0-2-complex-model}.
\item Computational time: Solving the optimization problem with $v$ and complex and nonlinear $\mathcal{F}$ could be inefficient to obtain optimal decisions.
\end{enumerate}

\section{Methodology}\label{sec_Mod}
To solve the above problems, we propose to formulate latent variables whose dimensions are largely reduced. With latent variables, the model accuracy and the optimization efficiency could be improved. \textcolor{black}{In this section, we first introduce the developed multi-task learning based framework to represent latent variables from the original data. Then we present how to optimize TCLs with latent variables based on OptIden and OptSim algorithms. The sequential relationship between the representation and optimization steps is illustrated in Fig.~\ref{fig_work}.}

\begin{figure}[t]
\centering
\includegraphics[width=0.48\textwidth]{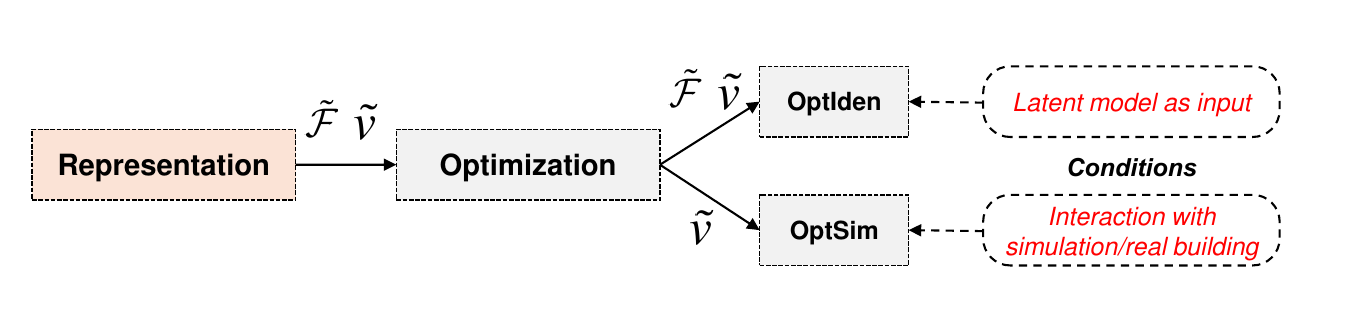}
\caption{\textcolor{black}{Workflow of the proposed methodology.}}
\label{fig_work}
\end{figure}

\subsection{Multi-task based Representation of Latent Variables}\label{sec_Mod1}
\subsubsection{Variable transformation}
We define the latent variables as $ \widetilde v = {[{\widetilde s},{\widetilde a},{\widetilde \delta}]^\intercal }$ whose dimension is much lower than ${v}$. In particular, we adopt AEs~\cite{chemE-1} as the unit model for variable transformation. The encoder $\mathcal E$ is defined as ${\widetilde v} = {\mathcal E}({v})$. Then the latent variables can recover to the original space with the decoder $\mathcal D$: ${v} = {{\mathcal E}^{-1}}({\widetilde v}) = {\mathcal D}({\widetilde v})$. Note that $\mathcal{E}$ and $\mathcal{D}$ are not necessarily to be linear and the transformation is independent, i.e.,
\begin{subequations} \label{AE_1}
\begin{align}
{[{\widetilde s},{\widetilde a},{\widetilde \delta}]^\intercal } & = {[\mathcal E_s({s}),{{\mathcal E}_a}({a}),{{\mathcal E}_\delta}({\delta})]^\intercal } \\
{[{s},{a},{\delta}]^\intercal } &= {[{\mathcal E}_s^{-1}({\widetilde s}),{\mathcal E}_a^{-1}({\widetilde a}),{\mathcal E}_\delta^{-1}({\widetilde \delta})]^\intercal } \nonumber \\
& = [{\mathcal D}_s({\widetilde s}),{{\mathcal D}_a}({\widetilde a}),{{\mathcal D}_\delta}({\widetilde \delta})]^\intercal 
\end{align}
\end{subequations}
where the subscripts ``$s, a, \delta$'' correspond to the state, action, and disturbance variables, respectively.

Then, we reformulate~\eqref{ori_pro} with latent variables:
\begin{subequations}\label{AE_new}
\begin{align}
\min\limits_{{\widetilde a}} & ~C(\mathcal{D}_s({\widetilde s}),\mathcal{D}_a({\widetilde a}),\mathcal{D}_\delta({\widetilde \delta})) \\
\text{s.t.} ~ &\mathcal{D}_s({\widetilde s}_{t+1}) = \mathcal F(\mathcal{D}_s({\widetilde s}_{t}),\mathcal{D}_a({\widetilde a}_{t}),\mathcal{D}_\delta({\widetilde \delta}_{t})) \label{AE_new_1} \\
& f(\mathcal{D}_s({\widetilde s}),\mathcal{D}_a({\widetilde a}),\mathcal{D}_\delta({\widetilde \delta})) \le 0 \label{AE_new_3}
\end{align}
\end{subequations}
where we denote the total objective function in~\eqref{ori_1} as $C$ and we utilize $f$ to represent the inequality constraints in~\eqref{ori_4}-\eqref{ori_6}. With latent variables as the inputs, they are transferred into $\widetilde C(\widetilde v)$ and $\widetilde f(\widetilde v)$, respectively.

In this new form of optimization, the key step is to identify the function~\eqref{AE_new_1}, which represents the dynamics within the latent space. We rewrite~\eqref{AE_new_1} as:
\begin{equation} \label{AE_equa_1}
\begin{split}
{\widetilde s}_{t+1} &= \mathcal{E}_s [ \mathcal{D}_s({\widetilde s}_{t+1}) ] \\
& = \mathcal{E}_s [ \mathcal F(\mathcal{D}_s({\widetilde s}_{t}),\mathcal{D}_a({\widetilde a}_{t}),\mathcal{D}_\delta({\widetilde \delta}_{t}))]
\end{split}
\end{equation}
which can be approximated with a latent state model $\mathcal {\widetilde F}$:
\begin{equation} \label{AE_equa_2}
{\widetilde s}_{t+1} = \mathcal {\widetilde F} ({\widetilde s}_{t},{\widetilde a}_{t},{\widetilde \delta}_{t})
\end{equation}

The \textbf{dimension-reduced optimization} with latent variables is
\begin{subequations}\label{AE_new_new}
\begin{align}
\min\limits_{{\widetilde a}} & ~\widetilde C({\widetilde s},{\widetilde a},{\widetilde \delta}) \\
\text{s.t.} ~ & \eqref{AE_equa_2} \\
& \widetilde f({\widetilde s},{\widetilde a},{\widetilde \delta}) \le 0 \label{AE_new_new_3}
\end{align}
\end{subequations}

In this problem, multiple tasks on identifying $\mathcal{E}$, $\mathcal{\widetilde F}$, and $\mathcal{D}$ should be conducted for variable transformation, dynamics modeling, and variable recovery, respectively.

\subsubsection{Training algorithm}
We advocate a multi-task learning based method to identify the required functions jointly. The proposed framework is presented in Fig.~\ref{fig_med_model}.
\begin{figure*}[t]
\centering
\includegraphics[width=0.8\textwidth]{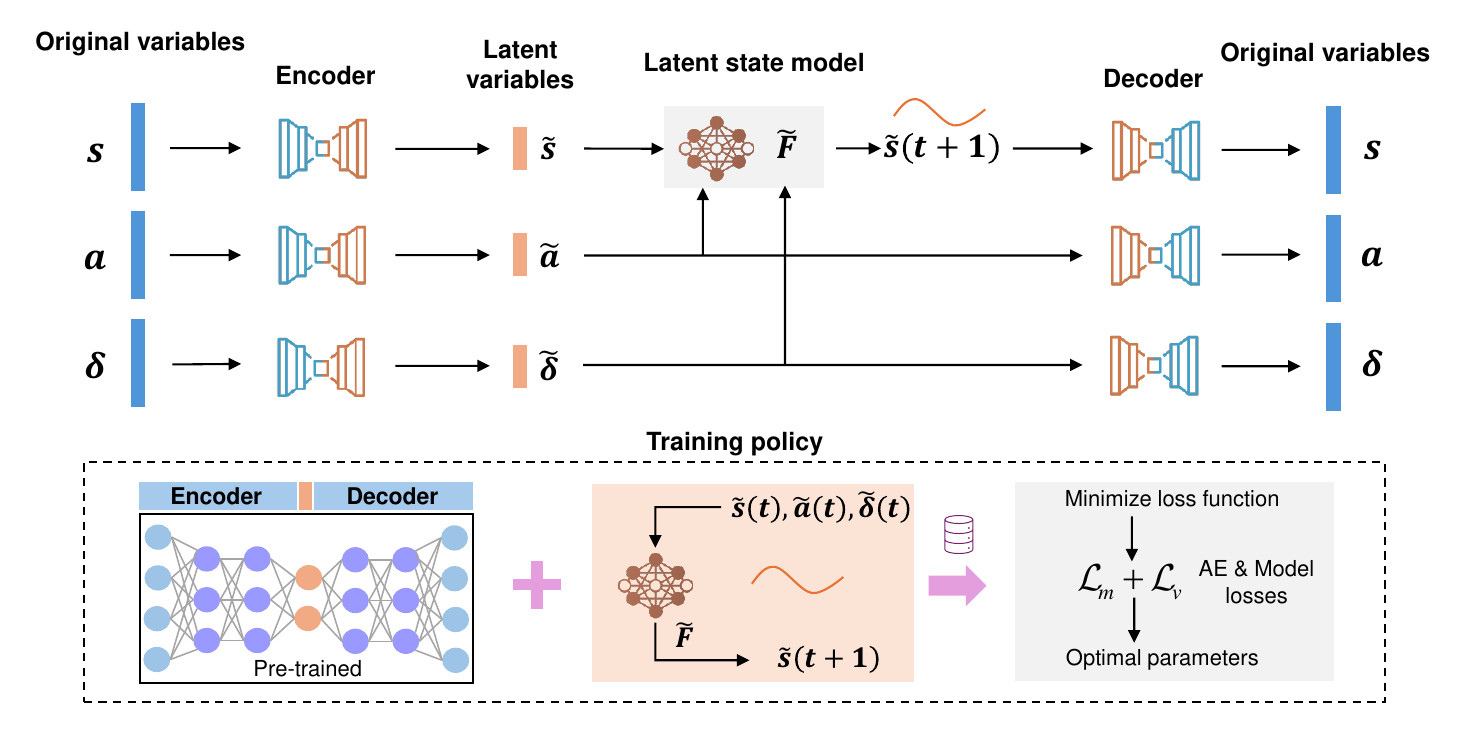}
\caption{Proposed framework for the representation of latent variables.}
\label{fig_med_model}
\end{figure*}

During the training process, the inputs of measured data are collected within the training period $\mathcal{T}^\text{train}$. \textcolor{black}{The collected data can potentially characterize the differences in thermal dynamic properties (e.g., thermal inertia) among different rooms. Therefore, by accurately characterizing the raw data, our trained model can effectively capture the variations in thermal inertia across rooms with latent variables.} The multi-task loss function $\mathcal L$ for training is formulated as
\begin{equation} \label{AE_loss_1}
\begin{split}
{\mathcal L} = & \omega {\mathcal L}_{m} + (1-\omega) {\mathcal L}_{v} \\
= & \omega \sum_{t \in \mathcal{T}^\text{train}} {\left \| \mathcal D_s[{\mathcal {\widetilde F}}({\mathcal E}({v}_t))] - {s}_{t+1} \right \|_2^2} \\
+ & (1-\omega) \sum_{t \in \mathcal{T}^\text{train}} { \left \| {\mathcal D}\left[ {{\mathcal E}({v}_t)} \right] - {v}_t \right \|_2^2}  \\
\end{split}
\end{equation}
where the first term ${\mathcal L}_{m}$ is to minimize errors between outputs of $\mathcal{\widetilde F}$ and measured states; the second term ${\mathcal L}_{v}$ is to minimize errors of encoders and decoders for variable transformation; $\omega$ is the weighting factor.

We further infer the gradients of $\mathcal L$ over all parameters to be learned as
\begin{subequations}\label{AE_loss_2}
\begin{align}
& \frac{{\partial {\cal L}}}{{\partial \widetilde {\cal F}}} = \frac{{\partial {{\cal L}_m}}}{{\partial {{\widetilde s}_{t + 1}}}}\frac{{\partial {{\widetilde s}_{t + 1}}}}{{\partial \widetilde {\cal F}}}\\
&\frac{{\partial {\cal L}}}{{\partial {{\cal D}_s}}} = \frac{{\partial {{\cal L}_m}}}{{\partial {{\cal D}_s}}} + \frac{{\partial {{\cal L}_v}}}{{\partial {{\cal D}_s}}},~\frac{{\partial {\cal L}}}{{\partial {{\cal D}_{a,\delta}}}} = \frac{{\partial {{\cal L}_v}}}{{\partial {{\cal D}_{a,\delta}}}}\\
&\frac{{\partial {\cal L}}}{{\partial {\cal E}}} = \frac{{\partial {{\cal L}_m}}}{{\partial {{\widetilde s}_{t + 1}}}}\frac{{\partial {{\widetilde s}_{t + 1}}}}{{\partial {{\widetilde v}_t}}}\frac{{\partial {{\widetilde v}_t}}}{{\partial {\cal E}}} + \frac{{\partial {{\cal L}_v}}}{{\partial {{\widetilde v}_t}}}\frac{{\partial {{\widetilde v}_t}}}{{\partial {\cal E}}}
\end{align}
\end{subequations}
which can be easily calculated by automatic differentiation in existing NN techniques~\cite{autograd}. Then the widely used back-propagation (BP) algorithm can be adopted for multi-task training.

\begin{remark}[\textcolor{black}{\textit{Independent AEs for variable representation}}]
\textcolor{black}{In Fig.~\ref{fig_med_model}, we adopt three independent AEs to represent latent variables of state, action, and disturbances, respectively. Compared with adopting only one AE for latent variable representation, the proposed framework has the following advantages: 1) it effectively isolates three kinds of variables during the representation of latent variables, enabling us to train a reduced-order model that preserves physical interpretability; 2) it offers adaptive dimensionality configuration for latent variables, enabling flexible representation of thermal dynamics. Specifically, complex time-varying disturbance variables are encoded into higher-dimensional latent spaces compared to other variables, thereby ensuring precise characterization of their dynamic behaviors.}
\end{remark}

\subsection{Dimension-reduced Optimization with Latent Variables}\label{sec_Mod3}
In this subsection, we introduce how to solve the dimension-reduced optimization with identified latent variables. We begin with the proposed OptIden algorithm where the identified latent model is used as the state-space model. On this basis, we explore the scenario where a OptSim algorithm can be formulated with feedback from the real environment.

\subsubsection{OptIden algorithm}
\begin{figure}[t]
\centering
\includegraphics[width=0.48\textwidth]{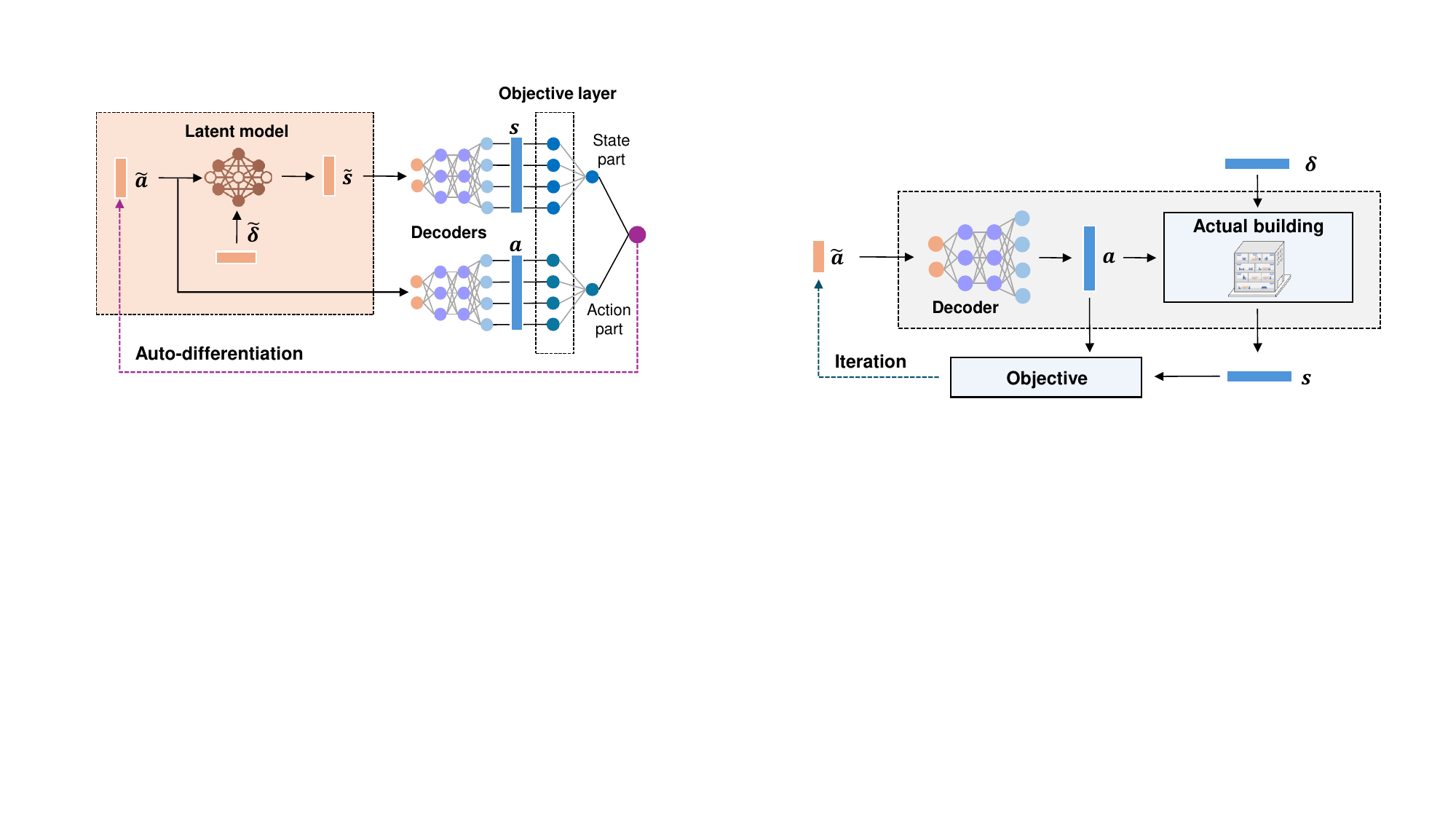}
\caption{Proposed OptIden algorithm with latent variables.}
\label{fig_med_opt}
\end{figure}

The proposed OptIden algorithm is presented in Fig.~\ref{fig_med_opt}. In particular, we rewrite~\eqref{AE_new_new} by adding the constraints as penalty terms, and the total objective function $\overline C$ is denoted as:
\begin{subequations}\label{opt_1}
\begin{align}
&\min \left \{ \overline C | \text{s.t.} ~ \eqref{AE_equa_2},~\forall t \in \mathcal{T} \right \} \\
&\overline C = \widetilde C + \rho \cdot ([\widetilde f, 0]_+)^2
\end{align}
\end{subequations}
where $\rho$ are weighting factors and $[\widetilde f, 0]_+$ returns $\widetilde f$ if $\widetilde f > 0$ and 0 otherwise.

Due to the unknown model structure in \eqref{AE_equa_2}, we design a gradient descent algorithm to solve the problem heuristically. Specifically, we accelerate the gradient derivation by reformulating \eqref{AE_equa_2} into NNs and facilitating the automatic differentiation. In Fig.~\ref{fig_med_opt}, it can be seen that the required gradients are formulated as
\begin{equation}\label{model_grd_1}
\frac{\partial \overline C}{\partial {\widetilde a}} = \frac{\partial \overline C}{\partial {a}} \cdot \frac{\partial {a}}{\partial {\widetilde a}} + \frac{\partial \overline C}{\partial {s}} \cdot \frac{\partial {s}}{\partial {\widetilde a}}
\end{equation}

For the latter terms, $\frac{\partial {a}}{\partial {\widetilde a}}$ and $\frac{\partial {s}}{\partial {\widetilde a}}$ are easily obtained through $\mathcal{D}_a$, $\mathcal{D}_s$, and $\mathcal{\widetilde F}$ in the proposed latent space. For the former terms, we design an objective layer based on calculating $\overline C$.

Note that the calculation of $\overline C$ includes the following general terms:
\begin{equation}\label{obj_terms}
\left \{ y=\omega x;~y=\omega x^2; ~y=w[x-{x_0},0]_+^2 \right \}
\end{equation}

The forward process in a standard NN is:
\begin{equation}\label{nn_terms}
z_l = \varphi (W \cdot {z_{l - 1}} + b),~\forall l
\end{equation}
where $z_l$ is the output of the $l$-th layer; $W,b$ are weight matrix and bias; $\varphi$ is the activation function. As shown in Table~\ref{tab-layer}, each term in~\eqref{obj_terms} can be transformed into~\eqref{nn_terms}.
\begin{table}[t]
\caption{Components in Objective Layer}
\centering
\resizebox{0.8\linewidth}{!}{\begin{tabular}{c|ccccc}
\hline
\multirow{2}{*}{\diagbox{Obj.}{NN.}} & \multicolumn{5}{c}{ $z_l = \varphi (W \cdot {z_{l - 1}} + b)$ } \\ \cline{2-6}
           & $\varphi$   & $W$   & $b$   & $z_{l - 1}$   & $z_{l}$   \\ \hline
$y=wx$       &  /  & $w$   &  0  &  $x$  & $y$  \\
$y=wx^2$      &  /  &  $\sqrt w$  & 0   & $x$  & $\sqrt y$  \\
$y=w[x-{x_0},0]_+^2$ & ReLU   & $\sqrt w$   & 0   & $x-x_0$   & $\sqrt y$   \\ \hline
\end{tabular}}
\label{tab-layer}
\end{table}

After gradient derivation through NNs, the overall OptIden method is presented in Algorithm 1. \textcolor{black}{In the algorithm, we adopt adaptive step sizes to balance the solving performance and convergence. Besides, we set conditions for step size update and convergence with the indicator function $\mathbb{I}[x]$, which returns 1 if the statement $x$ is true and 0 otherwise. In this way, if the cost function increases for $K_1$ continuous iterations, the step size is halved; if the absolute change in the cost function remains below 0.01 for $K_2$ continuous iterations, the algorithm converges early. The values of thresholds $K_1$ and $K_2$ cause dependent influence on the algorithm's convergence, and we determine them by grid search jointly.}
\begin{algorithm}[t]
\caption{OptIden Algorithm}\label{alg:model}
\KwIn{$\mathcal{T}$: optimization period; ${\delta}_t$: measured value of all disturbances at each time $t$; ${s}_0$: initial true value of states; $\mathcal{K}$: maximum iterations; $\eta$: step size.}
\KwOut{${\widetilde a}^*$: Variable of optimal decisions.}
Initialize decision variables ${\widetilde a}_{t,0}, t \in \mathcal{T}$ \;
Obtain the true value of latent disturbances by \\
${\widetilde \delta}_{t} = \mathcal{E}_\delta({\delta}_{t}) $\;
\For{each iteration $k \in \mathcal{K}
$}{
Obtain states ${\widetilde s}_{t,k}$ based on~\eqref{AE_equa_2} with inputs ${\widetilde a}_{t,k}, {\widetilde \delta}_{t}$ and then determine ${s}_{t,k} = \mathcal{D}_s({\widetilde s}_{t,k})$ \;

Calculate objective values $\overline C_k$ with inputs ${s}_{t,k}, {a}_{t,k}$\;

Determine gradients of each feature ${\widetilde a}_{k}$ by~\eqref{model_grd_1}, and update ${\widetilde a}_{k+1}$ by \\
${\widetilde a}_{k+1} = {\widetilde a}_{k} - \eta \cdot \nabla_{{\widetilde a}_{k}} \overline C_k$ \;
\textcolor{black}{
\If{$\sum\limits_{k'=k-K_1+1}^{k}{\mathbb{I}\left [\overline C_{k'+1} - \overline C_{k'} > 0 \right ] }\ge K_1$}{$\eta \gets \frac{1}{2} \cdot \eta$}
\If{$\sum\limits_{k'=k-K_2+1}^{k}{\mathbb{I}\left [|\overline C_{k'+1} - \overline C_{k'}| \le 0.01\right ]} \ge K_2$}{
\textbf{break}}}
}
\end{algorithm}

\subsubsection{OptSim algorithm}
As we will discuss in Section V, there are potential errors in the identified latent model that could lead to biased optimization solutions. Here we present an option to alleviate underlying errors, i.e., the OptSim algorithm, by interacting with the real environment under the following assumption.

\begin{assumption}[\textit{Fully observable state}]\label{assumption1}
We assume the optimization can immediately observe the updated system state $s_{t+1}$ following the application of new actions $a_t$ and disturbances $\delta_t$. In the context of building control, this implies that the building operator can instantly measure the updated indoor temperatures across all zones.
\end{assumption}

Note that Assumption~\ref{assumption1} holds true under the following conditions: 1) The building operator has access to a high-fidelity, real-time simulation model of the building, such as a digital twin; or 2)
The building's conditions, including occupancy and ambient factors, are relatively stable, allowing the operator to assume steady-state operation and observe temperature updates over an extended time horizon.

As shown in Fig.~\ref{fig_med_freeopt}, the major idea of the OptSim is to solve $\min \left \{ \overline C | \text{s.t.} ~ \eqref{ori_2},~\forall t \in \mathcal{T} \right \}$. The difference from~\eqref{opt_1} is that we only use the trained decoder $\mathcal{D}_a$ to obtain ${a}$, and we directly quantify $\overline C$ together with the outputs of simulation from $\mathcal{F}$. This procedure bypasses a trained model to avoid the influence of model errors.
\begin{figure}[t]
\centering
\includegraphics[width=0.48\textwidth]{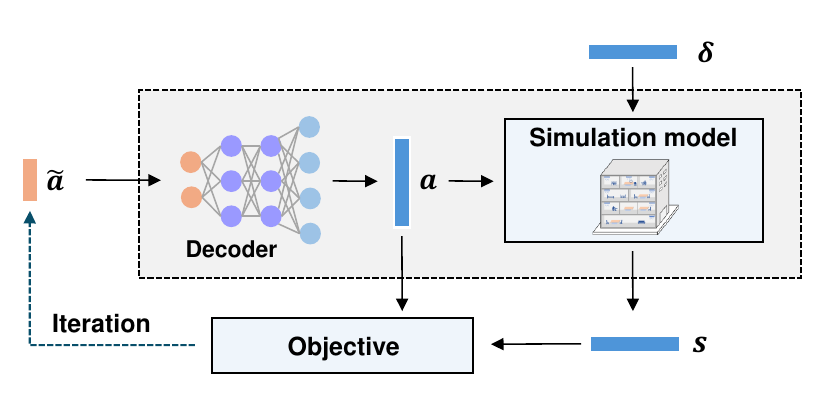}
\caption{Proposed OptSim algorithm with latent variables.}
\label{fig_med_freeopt}
\end{figure}

However, unlike the automatic differentiation adopted in the OptIden algorithm, the gradients in the OptSim algorithm can not be inferred because of the unknown $\mathcal{F}$. To that end, we regard both $\mathcal{D}_a$ and unknown $\mathcal{F}$ as an overall black-box function, i.e., $\mathcal{F}'({\widetilde a}) = \mathcal{F}(\mathcal{D}_a({\widetilde a}),\delta)$. Then we adopt the zeroth-order optimization to update ${\widetilde a}$. The update at each $k$-th iteration is presented as
\begin{subequations}\label{zero}
\begin{align}
&{\widetilde a}_{k+1} = {\widetilde a}_{k} - \eta \cdot {\xi}_k + \alpha ({\widetilde a}_{k} - {\widetilde a}_{k-1}) \label{zero_1} \\
&{\xi}_k = \frac{1}{r} \left [\overline C_k({\widetilde a}_{k} + r{u}_k;\mathcal{F}')-\overline C_k({\widetilde a}_{k};\mathcal{F}')\right ]{u}_k
\end{align}
\end{subequations}
where a two-point zeroth order optimization is adopted~\cite{Zeroth-method}. ${\xi}$ denotes the update term, representing the difference of measured objective values to replace gradients; $r$ is the radius to control the difference of input vector; ${u}_k \sim \mathcal{U}([-1,1])$ is the perturbed vector sampled from the uniform distribution $\mathcal{U}$ at each $k$-th iteration. An additional momentum term with weighting factor $\alpha$ is adopted in~\eqref{zero_1} to improve convergence.

\textcolor{black}{Note that, as a model-free technique for real-time control in a dynamic and complex environment, reinforcement learning (RL) has emerged as a promising tool for TCL scheduling by interacting with the simulation platform. The simulation would define the state, action, and reward function for RL training. In this context, our method shows high potential to address complementary challenges when integrating with RL. More details on related future research can be found in Section VII.}

\begin{remark}[\textit{Zeroth order optimization with latent variables}]
In Fig.~\ref{fig_med_freeopt}, we regard the newly defined $\mathcal{F}'$ as the overall simulation model. In this way, we can only update ${\widetilde a}$ instead of the original variable ${a}$. As we will show in Section VI, the performance of zeroth order optimization is influenced by the dimension of input variables. Thus, the major advantage of adopting the latent variable is that it improves convergence by reducing the variable dimension.
\end{remark}

\section{Latent Error Analysis}\label{sec_Mod4}
The data-driven modeling of latent variables in OptIden could lead to unavoidable errors (e.g., in $\mathcal{D}$ and $\mathcal{\widetilde F}$). These model errors could cause biased results in the following optimization process. In this section, we analytically present such error propagation in the OptIden algorithm. Then we show how the proposed OptSim algorithm helps alleviate such errors.

\subsection{Errors of OptIden Algorithm}
The errors to be analyzed are shown in Fig.~\ref{fig_err_begin}. We will first quantify the error during latent variable modeling, then we will discuss the influence of model errors on optimization results.
\begin{figure}[t]
\centering
\includegraphics[width=0.48\textwidth]{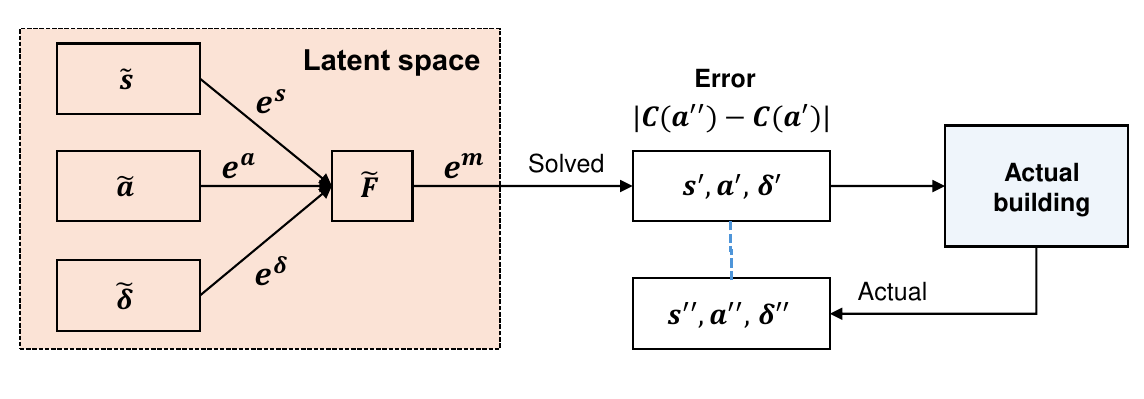}
\caption{Error analysis within the OptIden algorithm.}
\label{fig_err_begin}
\end{figure}

\subsubsection{Modeling of latent variables}
\begin{definition}
We denote the errors when formulating the latent state, action, and disturbance variables as ${e}^s,{e}^a$, and ${e}^\delta$, respectively, which are defined as
\begin{subequations}\label{err_1}
\begin{align}
& {e}^s = {s} - {{\mathcal D}_s}({\widetilde s}) \\
& {e}^a = {a} - {{\mathcal D}_a}({\widetilde a}) \\
& {e}^\delta = {\delta} - {{\mathcal D}_\delta}({\widetilde \delta})
\end{align}
\end{subequations}

Then the model error is defined as
\begin{equation}
{e}^m = {s} - {{\mathcal D}_s}(\mathcal {\widetilde F})
\end{equation}
which is the difference between the real state and the output of the identified latent model.
\end{definition}

\begin{proposition}[\textit{Latent model error analysis}]\label{the_1}
The model error ${e}^m$ depends on ${e}^s$ and is independent of ${e}^a$ and ${e}^\delta$. 
\end{proposition}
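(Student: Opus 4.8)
The plan is to write the model error $e^m$ as an explicit composition of the encoder/decoder maps and the latent transition $\widetilde{\mathcal F}$, and then show that among the three reconstruction residuals only the state decoder $\mathcal D_s$ — hence only $e^s$ — ever appears in the path that generates $e^m$. First I would expand the prediction underlying $e^m$. By definition $e^m = s_{t+1} - \mathcal D_s\big(\widetilde{\mathcal F}(\widetilde s_t,\widetilde a_t,\widetilde\delta_t)\big)$, where along a true trajectory the latent inputs are $\widetilde s_t=\mathcal E_s(s_t)$, $\widetilde a_t=\mathcal E_a(a_t)$, and $\widetilde\delta_t=\mathcal E_\delta(\delta_t)$. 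The key structural observation is that the action and disturbance enter this expression only through their \emph{encodings}, which are fed directly into $\widetilde{\mathcal F}$; the decoders $\mathcal D_a$ and $\mathcal D_\delta$ are never invoked. Since $e^a = a-\mathcal D_a(\widetilde a)$ and $e^\delta=\delta-\mathcal D_\delta(\widetilde\delta)$ are, by their definitions in~\eqref{err_1}, residuals of exactly those two decoders, neither can appear in $e^m$, which establishes the independence half of the claim.

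For the dependence on $e^s$, I would isolate it with a telescoping decomposition that inserts the ideal decoded latent successor $\mathcal D_s(\mathcal E_s(s_{t+1}))$:
\[
e^m = \underbrace{\big[\,s_{t+1} - \mathcal D_s(\mathcal E_s(s_{t+1}))\,\big]}_{=\,e^s} + \big[\,\mathcal D_s(\mathcal E_s(s_{t+1})) - \mathcal D_s(\widetilde{\mathcal F}(\widetilde s_t,\widetilde a_t,\widetilde\delta_t))\,\big].
\]
The first bracket is precisely the state reconstruction error $e^s$ evaluated at $t+1$, so $e^m$ and $e^s$ are governed by the same single decoder $\mathcal D_s$. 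The second bracket is the pure latent-dynamics identification residual of $\widetilde{\mathcal F}$ relative to the true latent transition $\mathcal E_s(s_{t+1})$ — the very quantity that the loss term $\mathcal L_m$ in~\eqref{AE_loss_1} drives toward zero. Under the idealization that $\widetilde{\mathcal F}$ perfectly identifies this transition, the second bracket vanishes and one obtains $e^m=e^s$, making both assertions simultaneously transparent; more conservatively, this bracket can be retained as a separate identification error that is again a function of $\mathcal D_s$ and $\widetilde{\mathcal F}$ only.

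The step I expect to be the main obstacle is making the independence claim airtight rather than merely structural: one must argue that the $\widetilde{\mathcal F}$ residual in the second bracket does not covertly reintroduce $e^a$ or $e^\delta$. I would close this by keeping the definitions sharp — $e^a$ and $e^\delta$ are \emph{decoder} reconstruction residuals, whereas $\widetilde{\mathcal F}$ consumes the latent action and disturbance without ever decoding them — so any imperfection of $\widetilde{\mathcal F}$ is a distinct error source, not a function of $\mathcal D_a$ or $\mathcal D_\delta$. Stating the perfect-identification idealization (or equivalently labeling the $\widetilde{\mathcal F}$ residual as a separate term) is what cleanly yields $e^m=e^s$ and completes the proof.
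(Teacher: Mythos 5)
Your proof is correct and follows essentially the same route as the paper's: the paper defines the latent identification residual $\widetilde e^m = \widetilde s - \widetilde{\mathcal F}(\widetilde a,\widetilde\delta)$ and telescopes $e^m = s - \mathcal D_s(\widetilde s) + \mathcal D_s(\widetilde s) - \mathcal D_s(\widetilde s - \widetilde e^m) = e^s + \epsilon(\widetilde e^m;\widetilde s)$, which is exactly your decomposition through $\mathcal D_s(\mathcal E_s(s_{t+1}))$ with your second bracket playing the role of $\epsilon$. Your added observation that $\mathcal D_a$ and $\mathcal D_\delta$ never appear in this expression makes the independence half slightly more explicit than the paper's proof, but the argument is the same.
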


\begin{proof}
We first define the learning error of $\mathcal{\widetilde { F}}$ as
\begin{equation}
{{\widetilde e}^m} = {\widetilde s} - \mathcal{\widetilde { F}}({\widetilde a},{\widetilde \delta})
\end{equation}
which measures the difference between the model output and the latent state variable.

Then we can build the relationship between ${e}^m$ and ${e}^s$ by
\begin{equation}
\begin{split}
{e}^m &= {s} - \mathcal{D}_s(\mathcal{\widetilde F}) = {s} - \mathcal{D}_s({\widetilde s} - {\widetilde e}^m) \\
& = {s} - \mathcal{D}_s({\widetilde s}) + \mathcal{D}_s({\widetilde s}) -\mathcal{D}_s({\widetilde s} - {\widetilde e}^m) \\
& = {e}^s + \epsilon({{\widetilde e}^m};{\widetilde s})
\end{split}
\end{equation}
where $\epsilon$ is a residual term dependent on the variable $\widetilde s$ and the error ${\widetilde e}^m$.
\end{proof}

\subsubsection{Optimization results}
Given the latent model, we continue to quantify the errors in the optimization results.

\begin{definition}
After solving~\eqref{opt_1}, the results of the original variables can be obtained by
\begin{subequations}
\begin{align}
&{a}' = {{\mathcal D}_a}({\widetilde a})\\
&{s}'_{t+1} = {{\mathcal D}_s}(\mathcal{\widetilde F}({\widetilde s}_{t},{\widetilde a}_t,{\widetilde \delta}_t)),~\forall t
\end{align}
\end{subequations}
where ${a}'$ and ${s}'$ means solutions of~\eqref{opt_1}.

\textcolor{black}{Note that the optimization problem of~\eqref{ori_pro} includes soft and hard constraints for the state~\eqref{ori_4} and action~\eqref{ori_6} limits, respectively. Thus, after transforming the hard constraint~\eqref{ori_6} into the penalty term in~\eqref{opt_1}, it may cause violations regarding power bounds when the penalty is not zero. A strategy is formulated to regulate the violated part and thus obtain the actual values:}
\textcolor{black}{\begin{subequations}
\begin{align}
&{{a}^{{''}}} = \text{Proj}({{a}^{{'}}}) = \left\{ \begin{array}{l}
{{a}^{{'}}};~\underline a \le {{a}^{{'}}} \le \overline a\\
\underline a;~{{a}^{{'}}} < \underline a\\
\overline a;~{{a}^{{'}}} > \overline a
\end{array} \right. \\
&{s}''_{t+1} = \mathcal{F}({s}''_{t},{ a}''_t,\delta_t),~\forall t
\end{align}
\end{subequations}
where $\text{Proj}()$ is the projection function; ${a}''$ is the actual action variable after projection and ${s}''$ is the actual state variable.}

Taking $({s}',{s}')$ as an example, and denoting the zone set as $\mathcal{Z}$, the objective in~\eqref{ori_1} can be transformed into
\begin{equation}\label{math_2_defi}
\begin{split}
&C({a}',{s}') \\
& = \sum\limits_{t \in {\mathcal T}} {\sum\limits_{i \in {\mathcal Z}} {\lambda_t a'_{i,t} + P_{i,t}([s'_{i,t} - \overline s_{i,t}]_+^2 + [\underline s_{i,t} - s'_{i,t}]_+^2) }}
\end{split}
\end{equation}
and $C({a}'',{s}'')$ can be calculated in the same way.
\end{definition}



\begin{proposition}[\textit{Latent optimization error analysis}]\label{propo_opt}
The optimization error $C(a'',s'') -C(a',s')$ has a nonlinear relationship with $e^a$ and $e^m$ and is independent of $e^\delta$. 
\end{proposition}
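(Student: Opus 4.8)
The plan is to expand the optimization-error difference $C(a'',s'')-C(a',s')$ using the explicit form in~\eqref{math_2_defi} and to track how each of the three representation errors enters the two trajectories. First I would split the difference into an action-cost term $\sum_{t}\sum_{i}\lambda_t(a''_{i,t}-a'_{i,t})$ and a temperature-penalty term $\sum_{t}\sum_{i}P_{i,t}\bigl(g(s''_{i,t})-g(s'_{i,t})\bigr)$, where $g(x)=[x-\overline s_{i,t}]_+^2+[\underline s_{i,t}-x]_+^2$ collects the two ReLU-quadratic penalties. Isolating these terms makes clear that $\delta$ never appears explicitly in the objective, so any dependence on the disturbance can only be routed through the state trajectories $s'$ and $s''$.

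Next I would establish the independence from $e^\delta$. For the predicted trajectory, $s'_{t+1}=\mathcal{D}_s(\widetilde F(\widetilde s_t,\widetilde a_t,\widetilde\delta_t))$ with $\widetilde\delta_t=\mathcal{E}_\delta(\delta_t)$ supplied directly by the encoder (cf.\ Algorithm~\ref{alg:model}); for the realized trajectory, $s''_{t+1}=\mathcal F(s''_t,a''_t,\delta_t)$ uses the measured $\delta_t$ itself. In both cases the disturbance enters through its true or encoded value and is never passed through $\mathcal D_\delta$, whereas $e^\delta=\delta-\mathcal D_\delta(\widetilde\delta)$ is by definition the \emph{decoding} residual. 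Hence $e^\delta$ cannot appear in either $s'$ or $s''$, and therefore it drops out of the objective difference.

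For the dependence on $e^a$ and $e^m$ I would argue term by term. For the action term, recall $e^a=a-a'$ so that $a'=a-e^a$ and $a''=\text{Proj}(a')=\text{Proj}(a-e^a)$; consequently the action-cost gap $\sum_t\lambda_t\,1^\intercal\bigl(\text{Proj}(a-e^a)-(a-e^a)\bigr)$ is a piecewise-linear, hence nonlinear, function of $e^a$ that is nonzero precisely when the decoded action leaves $[\underline A_t,\overline A_t]$. For the penalty term I would relate the two trajectories through the model error: at each step the true update $\mathcal F$ and the latent surrogate $\mathcal D_s\circ\widetilde F\circ\mathcal E$ differ by a residual governed by $e^m$ (cf.\ Proposition~\ref{the_1}, which gives $e^m=e^s+\epsilon$), while the inputs themselves differ by the action gap $a''-a'$. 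Propagating both effects through the recursion yields $s''-s'$ as an accumulated, nonlinear function of $e^m$ and $e^a$ over the horizon, and composing this with the ReLU-quadratic $g$ preserves the nonlinearity, giving $C(a'',s'')-C(a',s')=\Phi(e^a,e^m)$ with $\Phi$ nonlinear and independent of $e^\delta$.

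The main obstacle will be this last step: cleanly expressing $s''-s'$ as a function of $e^a$ and $e^m$ when the two trajectories are generated by \emph{different} maps ($\mathcal F$ versus $\mathcal D_s\circ\widetilde F\circ\mathcal E$) and the errors compound recursively through the time-coupled state equation. I expect to handle it by a telescoping argument along $t\in\mathcal T$, expanding the per-step discrepancy via a first-order expansion of $\mathcal F$ and $\mathcal D_s$ about the predicted state and absorbing the higher-order cross terms into the nonlinear residual, so that only the \emph{functional} dependence on $(e^a,e^m)$---not a closed-form expression---needs to be certified.
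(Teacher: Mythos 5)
Your proposal follows essentially the same route as the paper's proof: decompose $C(a'',s'')-C(a',s')$ into the linear action-cost gap and the ReLU-quadratic penalty gap, express $a''-a'$ through the projection as a piecewise (hence nonlinear) function of $e^a$, and write $s''-s'$ as the model error $e^m$ plus a residual driven by the action discrepancy, so the overall error is a nonlinear function of $(e^a,e^m)$ only. Your treatment is in fact slightly more careful than the paper's in two places --- you make explicit why $e^\delta$ (a \emph{decoding} residual never invoked along either trajectory) cannot enter, and you acknowledge the multi-period recursion that the paper sidesteps by restricting to a single time step --- but these are refinements of the same argument rather than a different one.
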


\begin{proof}



It can be seen from~\eqref{math_2_defi} that the objective values are the summation of results at all time slots and building zones. For convenience, we only discuss the single-period result and ignore the indices $i,t$.

We first quantify the errors between solved and actual variables. For $a^{{''}}$ and $a^{{'}}$, the error $\widetilde e^a$ is calculated as
\begin{equation}\label{error_A}
\begin{split}
{\widetilde e^a}(a') &= {a^{{''}}} - {a^{{'}}} \\
&= \left\{ \begin{array}{l}
0;~\underline a \le {a^{{'}}} \le \overline a\\
\underline a - {a^{{'}}};~{a^{{'}}} < \underline a\\
\overline a - {a^{{'}}};~{a^{{'}}} > \overline a
\end{array} \right.
\end{split}
\end{equation}

Then the error $\widetilde e^s$ for $s^{{''}}$ and $s^{{'}}$ is
\begin{equation}\label{error_S}
\begin{split}
\widetilde e^s &= {s^{{''}}} - {s^{{'}}} \\
& = {\mathcal F}({a^{{''}}},\delta) - {s^{{'}}}\\
& = {\mathcal F}({a^{{'}}} + {\widetilde e^a},\delta) - {s^{{'}}}\\
& = {\mathcal F}(a - {e^a} + {\widetilde e^a},\delta) - s^{{'}}\\
& = {\mathcal F}(a,\delta) + \epsilon'({\widetilde e^a} - {e^a};a) - {s^{{'}}}\\
& = {e^m} + \epsilon'(\widetilde e^a - e^a;a)
\end{split}
\end{equation}
where $\epsilon'$ is the residual term related to variable $a$ and the error $(\widetilde e^a-e^a)$. Based on~\eqref{error_A} and~\eqref{error_S}, we know $\widetilde e^s$ is related to $(e^m,e^a,a')$.

According to~\eqref{math_2_defi}, the error between the solved and actual objectives can be represented as
\begin{equation}\label{math_err_act}
\begin{split}
\widetilde e &= C({a^{{''}}},{s^{{''}}}) - C({a^{{'}}},{s^{{'}}}) \\
&= \lambda \cdot {\widetilde e^a} + P([(s'+\widetilde e^s) - \overline s]_+^2 + [\underline s - (s'+\widetilde e^s)]_+^2) \\
&- P([s' - \overline s]_+^2 - [\underline s - s']_+^2)
\end{split}
\end{equation}
where the error related to the temperature penalty part is explained in Fig.~\ref{fig_err_piece}.
\begin{figure}[t]
\centering
\includegraphics[width=0.48\textwidth]{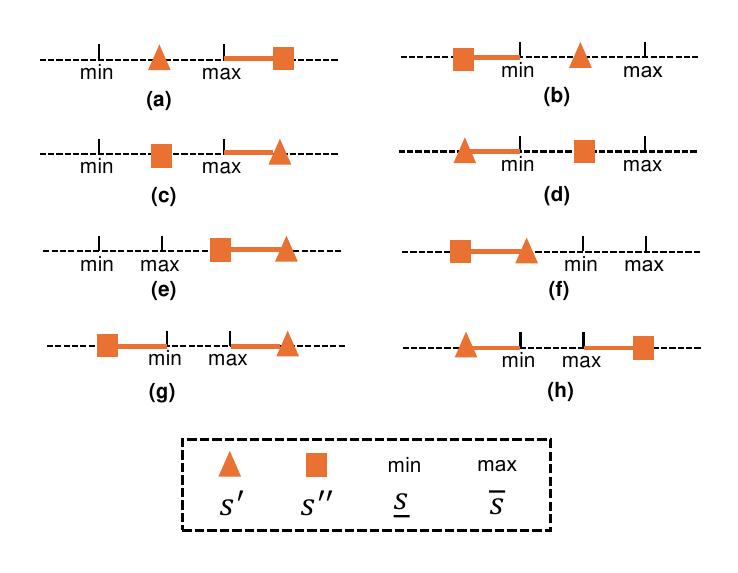}
\caption{Errors in the temperature-related objectives.}
\label{fig_err_piece}
\end{figure}


Finally, it can be found that $e'$ is related to $(\widetilde e^a,\widetilde e^s)$, and then $(e^m,e^a,a')$, i.e., the error in the optimization result is mainly influenced by the representation errors of latent action and state variables.
\end{proof}

\begin{remark}[\textcolor{black}{\textit{Mitigation of modeling and optimization errors}}]
\textcolor{black}{Based on the above analysis, we provide empirical strategies to reduce errors in latent modeling and optimization. Specifically, achieving an accurate latent model requires particular attention to reducing state variable errors. Practical approaches include fine-tuning the AE's hyperparameters (e.g., the dimension of latent variables), fine-tuning the model's hyperparameters, and improving the training strategy (e.g., the maximum epoch and learning rate). Furthermore, the learning errors of models and decision variables should be highlighted to reduce optimization errors and obtain reliable decisions. Besides, the rational setting of threshold values in Algorithm 1 could help to reach suboptimal solutions.}
\end{remark}

\subsection{Errors of OptSim Algorithm}
We will analyze the errors in the OptSim algorithm and show how it alleviates the optimization errors in the OptIden algorithm.
Because only the decoder $\mathcal{D}_a$ is adopted in the OptSim algorithm, there is only $e^a$ in the latent space. Similarly, the error $\widetilde e^a$ in~\eqref{error_A} also exists considering the violation risk of action variables. Under Assumption 1, the OptSim algorithm can use a simulation model as $\mathcal{F}$ to output the actual states, so the error $\widetilde e^s$ can be represented by
\begin{equation}\label{error_free_S}
\begin{split}
\widetilde e^s
& = {\mathcal F}({a^{{''}}},\delta) - {\mathcal F}({a^{{'}}},\delta)\\
& = {\mathcal F}({a^{{'}}} + {\widetilde e^a},\delta) - {\mathcal F}({a^{{'}}},\delta)\\
& = \epsilon'(\widetilde e^a;a')
\end{split}
\end{equation}
Hence $\widetilde e^s$ in the OptSim algorithm depends only on $\widetilde e^a$ but has no relationship with $e^a$. Considering that the penalty term generally has small errors after solving, the solution of the OptSim algorithm has marginal error in optimization results.

\begin{remark}[\textit{Error comparison}]
Based on the above analysis, the OptSim algorithm can effectively reduce the errors in the optimization results by avoiding identified latent modeling errors. For the OptIden algorithm, improving the modeling accuracy is critical for decision-making; for the OptSim algorithm, a prerequisite is the interaction with the simulation model, while the effectiveness of the zeroth order optimization will be tested in the case study.
\end{remark}

\section{Case Studies}\label{sec_Case}
In this section, we verify the performance of the proposed method with a multi-zone building case. We compare the optimization results for TCL scheduling with methods based on original variables. We also discuss some key issues during the solving process, involving time efficiency, convergence, data robustness, etc. Source code, input data, and solving results are available on Github\footnote{\url{https://github.com/hkuedl/Latent-Variable-based-Optimization}}.

\subsection{Simulation Setup}
\subsubsection{Data preparation}
In our case, we adopt a 90-zone apartment prototype provided by the U.S. Department of Energy (DOE) for case simulation~\cite{case-buildings}. \textcolor{black}{Furthermore, by scaling the number of zones per floor to four times while tripling the number of floors, we assume an extended building prototype with a 12-fold increase in total zone number (i.e., 1080 zones). Correspondingly, we augment the simulated data of the 90-zone building to obtain the data for the study of the 1080-zone case. The illustration of the adopted building case and the extension process is explained in Fig.~\ref{fig_building}.} In particular, we collect available data including indoor temperature (℃), outdoor temperature (℃), solar radiation power (kW), internal occupancy (kW), and HVAC cooling power (kW) of each zone. Based on these data, we build a black-box surrogate model, assuming it as the ground-truth building and also the simulation model for OptSim. For the learning-based representation of latent variables, the training and test periods of collected data are set as 01/06-31/07 and 01/08-31/08, respectively. We set a day-ahead optimization for TCL scheduling with a 15-minute time interval. \textcolor{black}{We set the measured average temperature series as the baseline curve and set symmetric and time-varying upper/lower bounds as the comfort level constraints. For example, the temperature comfort bounds on the first day of the training period are presented in Fig.~\ref{fig_comfort}.} More detailed parameter settings are presented in Table~\ref{tab-param}. All algorithms are executed on a computer with a 3.40 GHz Intel Xeon(R) CPU with 160 GB of RAM.
\begin{figure}[t]
\centering
\includegraphics[width=0.45\textwidth]{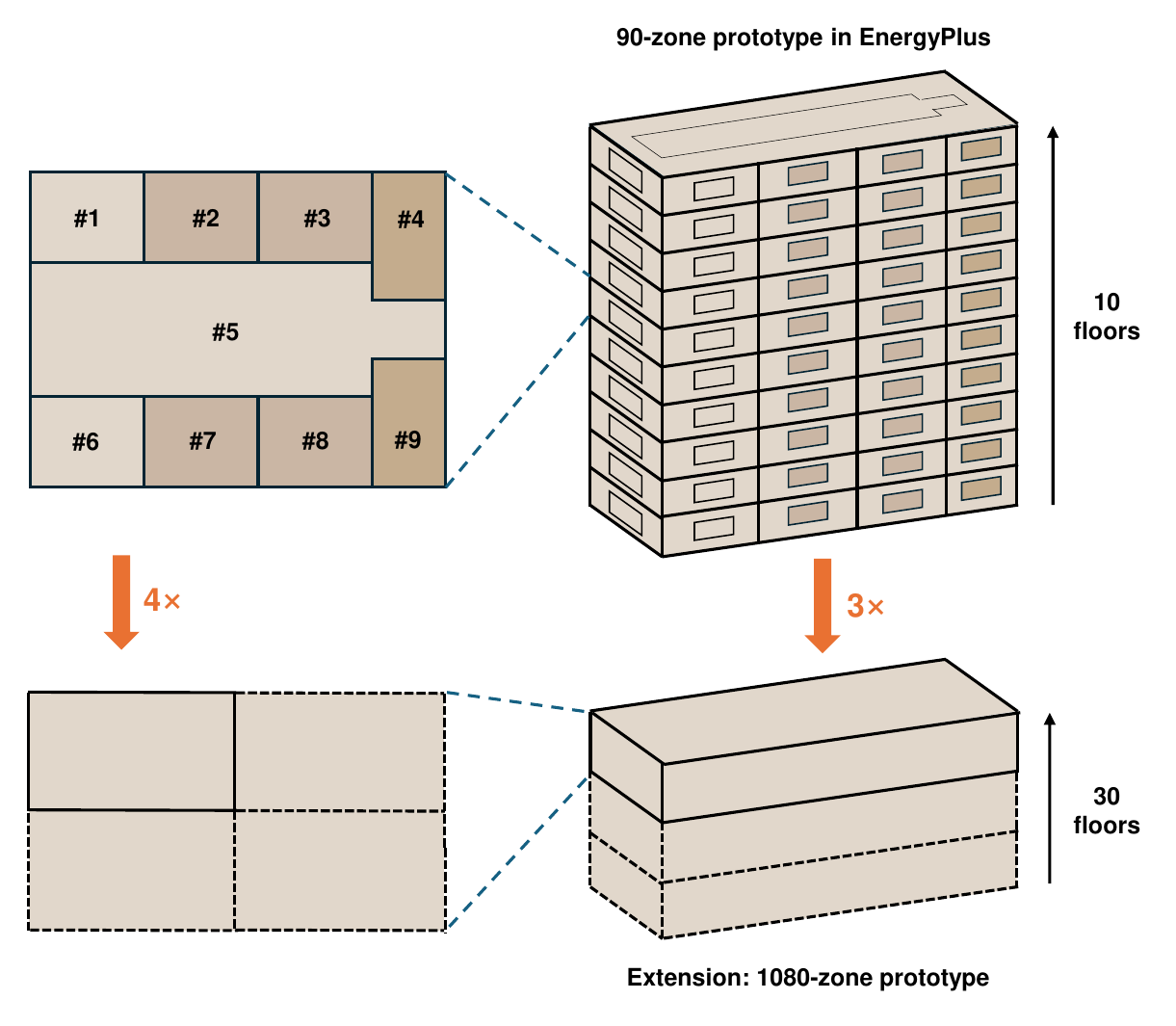}
\caption{\textcolor{black}{Building prototypes for case study (the 90-zone building in EnergyPlus and the assumed 1080-zone building).}}
\label{fig_building}
\end{figure}
\begin{figure}[t]
\centering
\includegraphics[width=0.45\textwidth]{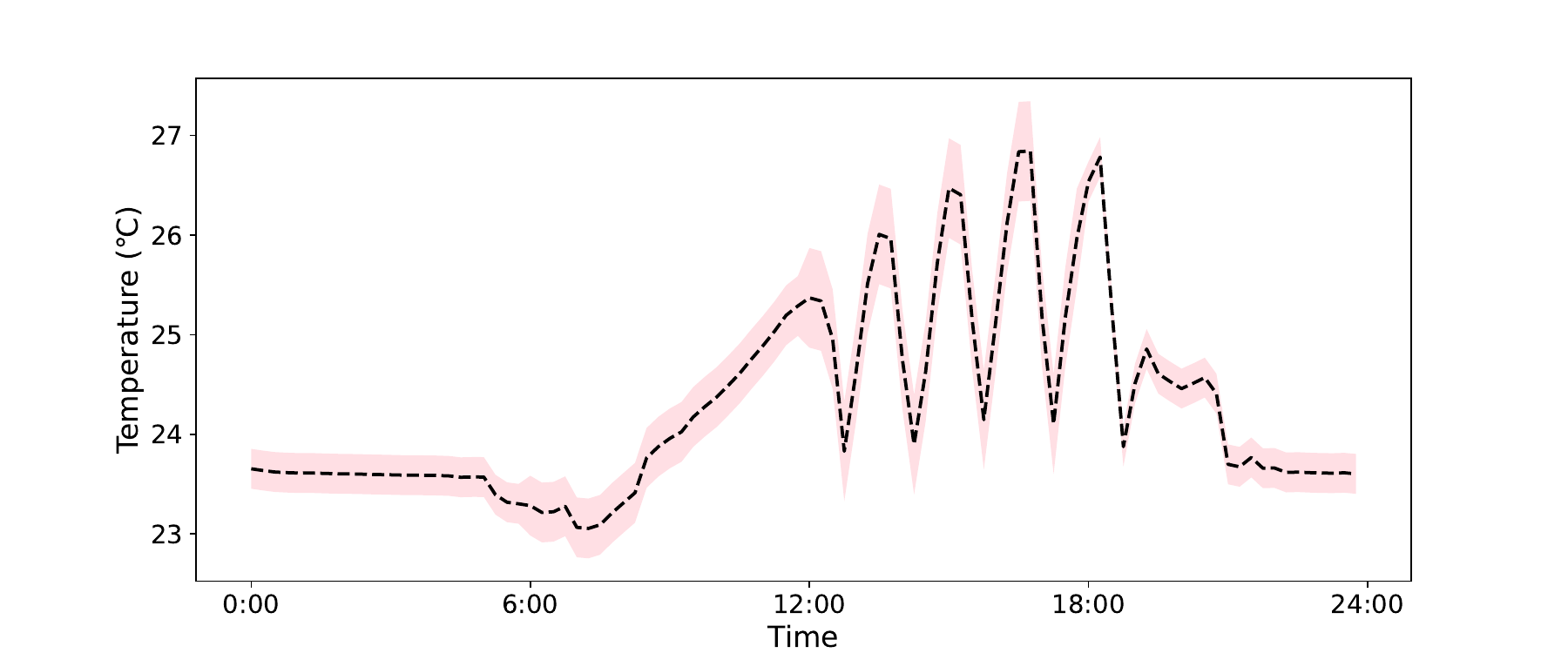}
\caption{\textcolor{black}{Comfort bounds on the 1st day of the training period.}}
\label{fig_comfort}
\end{figure}

\begin{table}[t]
\caption{Parameter Settings in Case Studies}
\centering
\resizebox{0.7\linewidth}{!}{\begin{tabular}{cc}
\hline
\multicolumn{1}{c|}{\textbf{Parameters}} & \textbf{Configurations} \\ \hline
\multicolumn{2}{c}{Variable formulation for 90-zone case}                       \\ \hline
\multicolumn{1}{c|}{AE of states}        & {[}64,32,16{]}          \\
\multicolumn{1}{c|}{AE of decisions}     & {[}64,32,16{]}          \\
\multicolumn{1}{c|}{AE of disturbances}  & {[}128,64,32{]}         \\
\multicolumn{1}{c|}{Activation function and epochs}        & ReLU, 2000          \\ \hline
\multicolumn{2}{c}{\textcolor{black}{Variable formulation for 1080-zone case}}                       \\ \hline
\multicolumn{1}{c|}{AE of states}        & {[}128,64,32{]}          \\
\multicolumn{1}{c|}{AE of decisions}     & {[}128,64,32{]}          \\
\multicolumn{1}{c|}{AE of disturbances}  & {[}128,128,64{]}         \\
\multicolumn{1}{c|}{Activation function and epochs}        & ReLU, 2000          \\ \hline
\multicolumn{2}{c}{Optimization}                          \\ \hline
\multicolumn{1}{c|}{Coefficient of performance}                 & 3.6                     \\
\multicolumn{1}{c|}{Power limit}       & 15 kW                    \\
\multicolumn{1}{c|}{Power penalty} & 10            \\
\multicolumn{1}{c|}{Temperature penalty} & 0.002 (\$/℃$\cdot$h)            \\
\multicolumn{1}{c|}{Price signal}        & NYISO 2023~\cite{case-price}              \\ \hline
\end{tabular}}
\label{tab-param}
\end{table}

\subsubsection{Comparisons}
To verify the effectiveness of the proposed method, we first solve ``Ground-truth" results based on the black-box building. The optimal solutions are obtained by numerical methods and the gradient descent algorithm. By setting an interval $\Delta {A}_{i,t}$, the numerical differentiation can be formulated as
\begin{equation}\label{numer}
 \nabla_{{a}_{i,t}} \overline C = \frac{\overline C({a}_{i,t}+\Delta {a}_{i,t})-\overline C({a}_{i,t})}{\Delta {a}_{i,t}},~\forall i,t
\end{equation}

We then set both identification and simulation models with original variables, denoted as \textbf{OriIden} and \textbf{OriSim}, respectively. \textcolor{black}{For OriIden, we identify a linear model and thus solve it with commercial solvers; we apply the zeroth-order optimization algorithm for OriSim. More specifically, OriIden and OriSim use original variables instead of the learned latent variables in Subsection IV.A for optimization. OriIden identifies a linear thermal dynamics model and incorporates it into the optimization problem of \eqref{ori_pro}; OriSim follows a similar way as OptSim (Fig.~\ref{fig_med_freeopt}) but solves with the original variables directly. Mathematically, OriIden differentiates from the Ground-truth and OptIden in the calculation of gradients for problem solving; OriSim differentiates from the Ground-truth in the iteration mechanism, and it differentiates from the OptSim in the update of variable types.} Detailed settings for all comparison methods are explained in Table~\ref{tab-compari}.
\begin{table}[t]
\caption{Comparison Settings}
\centering
\resizebox{0.95\linewidth}{!}{\begin{tabular}{cccc}
\hline
\textbf{Notations} & \textbf{Variable} & \textbf{Types} & \textbf{Optimization}                \\ \hline
Ground-truth    & Original   & / & Numerical differentiation                \\
OriIden           & Original  & Linear model & Commerical solver (Gurobi~\cite{gurobi})      \\
OriSim           & Original  & Simulation & Zeroth order optimization      \\
OptIden      & Latent & Linear model  & Auto-differentiation   \\
OptSim      & Latent & Simulation & Zeroth order optimization   \\ \hline
\end{tabular}}
\label{tab-compari}
\end{table}

\subsection{Latent Variables Representation}
Based on the proposed method, latent variables are represented from the collected data. As shown in Table~\ref{tab-dim}, the dimensions of the original variables are around 20-30 times more than the latent variables.
\begin{table}[t]
\caption{Dimensions of Original and Latent Variables}
\centering
\resizebox{0.85\linewidth}{!}{\begin{tabular}{r|cccc}
\hline
\textbf{Variables per time step} & \textbf{State} & \textbf{Decision} & \textbf{Disturbance} & \textbf{Total} \\ \hline
Original space     & 90                      & 80                           & 181                  & 351            \\
Latent space       & 3                       & 4                            & 6                    & 13             \\ \hline
\end{tabular}}
\label{tab-dim}
\end{table}

Furthermore, we compare the accuracy of identification models in OriIden and OptIden in Table~\ref{tab-acc}. \textcolor{black}{Three commonly used metrics are adopted, including root mean square error (RMSE), mean absolute error (MAE), and R-square (R2)~\cite{new-vpp}. The results are calculated by inputting the actual and predicted values of each zone's indoor temperature during the training and test periods, respectively. Both the mean and standard deviation values among 90 zones are presented in the table. Note that the errors for both OriIden and OptIden follow the same calculation method.} Compared with OriIden, the proposed approach can reduce RMSEs by around 29\% and 16\% in the training and test sets, respectively. Similar results are shown in other metrics, demonstrating a high potential for representing complex dynamics with low-dimensional variables.
\begin{table}[t]
\caption{Model Errors}
\centering
\resizebox{0.95\linewidth}{!}{\begin{tabular}{c|c|c|c|c}
\hline
\textbf{Methods}                         & \textbf{Dataset} & \textbf{RMSE (℃)} & \textbf{MAE (℃)} & \textbf{R2} \\ \hline
\multirow{2}{*}{OriIden}          & Train   & $0.4084\pm0.08$            & $0.3087\pm0.06$          & $0.8325\pm0.03$     \\
                                   & Test    & $0.4152\pm0.09$           & $0.3174\pm0.07$          & $0.8223\pm0.05$ \\ \hline             
\multirow{2}{*}{OptIden} & Train   & $0.2880\pm0.07$           & $0.1942\pm0.05$          & $0.9182\pm0.02$     \\
                                   & Test    & $0.3482\pm0.09$           & $0.2400\pm0.06$          & $0.8778\pm0.03$      \\ \hline
\end{tabular}}
\label{tab-acc}
\end{table}

\subsection{Optimization Results}
With latent variables, we conduct day-ahead optimization for all 31 days during the test period, and the solving results are analyzed in this subsection.
\subsubsection{Whole period results}
We first present actual costs for all comparisons in Fig.~\ref{fig_allday}. Compared with the ground truth, the results of all methods are subject to errors. The proposed methods have significantly lower gaps than OriIden and OriSim on most days. Roughly speaking, methods ordered by cost values are OriIden, OriSim, OptIden, and OptSim. This suggests that solving the optimization problem with latent variables and simulation models could realize lower costs than with original variables and identification models, respectively. Besides that, the proposed methods have more stable results in costs and the changing trend is consistent with the ground-truth during the whole period. On the contrary, the results from OriIden and OriSim change frequently and randomly, exhibiting unreliable optimization decisions for temperature control.
\begin{figure*}[t]
\centering
\includegraphics[width=0.99\textwidth]{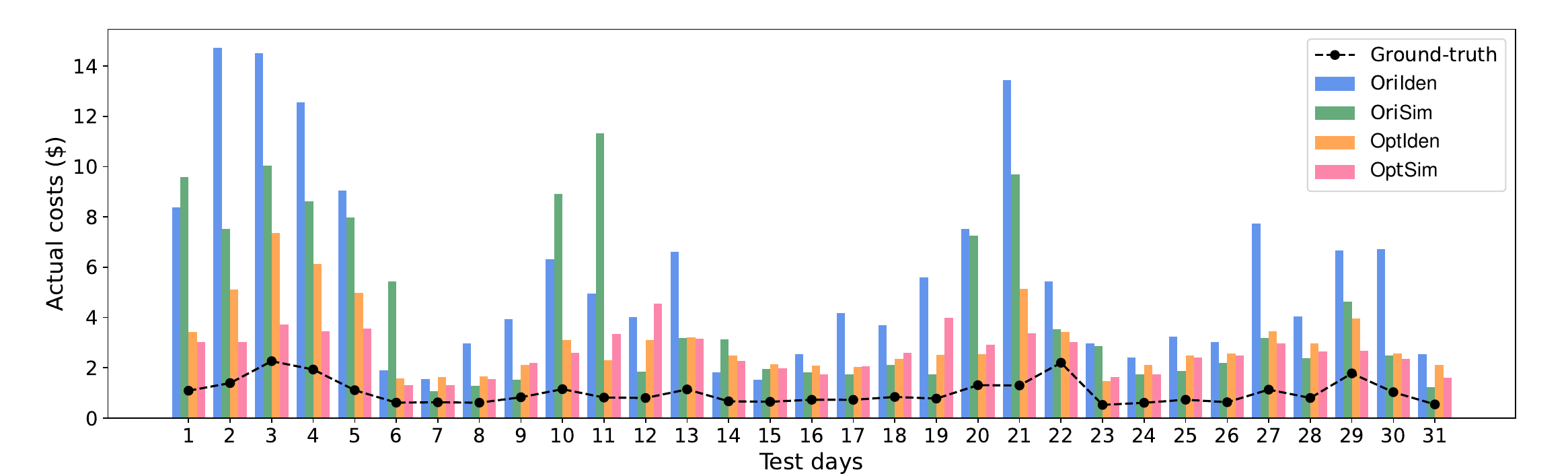}
\caption{Actual costs after optimization during the whole period.}
\label{fig_allday}
\end{figure*}

In Table~\ref{tab-avgcosts}, we further verify the above observations by presenting the mean and standard deviation values for the whole period. Compared with OriIden and OriSim, the proposed methods realize lower values in the mean and standard deviation of daily costs. Therefore, the proposed methods can obtain decisions that result in lower costs and are more consistent with ground-truth solutions.
\begin{table*}[t]
\caption{Whole-Period Statistical Results (\$)}
\centering
\resizebox{0.6\linewidth}{!}{\begin{tabular}{c|ccccc}
\hline
\textbf{Methods} & \textbf{Ground-truth} & \textbf{OriIden} & \textbf{OriSim} & \textbf{OptIden}& \textbf{OptSim}                \\ \hline
Mean  & 1.0113   & 5.6943 & 4.3182 & 3.0382 & 2.6201 \\ \hline
Standard deviation  & 0.4671   & 3.7406 & 3.1919 & 1.3709 & 0.8012 \\ \hline
\end{tabular}}
\label{tab-avgcosts}
\end{table*}

\subsubsection{Specific case analysis}
We specify the first day in the test period for detailed performance analysis. The solved (``\_dec") and actual (``\_act") results in Section V are shown in Table~\ref{tab-cost-1}. The results on actual costs are identical to Fig.~\ref{fig_allday}, where the proposed methods contribute to lower errors with ground truth compared with original variables. For errors between solved and actual results, the optimization based on simulation models with either original or latent variables could result in lower errors (around 0.2-0.4) than based on identification models due to the direct interaction with the real building. More specifically, for the two kinds of identification models, OptIden has a lower error (1.8141) than OriIden (7.8233) because it is more accurate in Table~\ref{tab-acc} to represent thermal dynamics.
\begin{table}[t]
\caption{Detailed Comparison of Costs (\$)}
\centering
\resizebox{0.95\linewidth}{!}{\begin{tabular}{c|c|c|c|c|c}
\hline
         & \textbf{Ground-truth} & \textbf{OriIden}& \textbf{OriSim} & \textbf{OptIden} & \textbf{OptSim} \\ \hline
Pow\_dec &   -           & 0.4125   &    5.2874  &    0.9731           & 1.0540       \\
Pow\_act & 0.3875       & 0.4125     & 4.2335   &     0.9741          & 1.0838      \\ \hline
Tem\_dec &    -          & 0.1497  & 4.6618   &    0.6448           & 2.2061       \\
Tem\_act & 0.7017       & 7.9730   & 5.3463  &  2.4580         & 1.9477       \\ \hline
Sum\_dec &     -         & 0.5622 & 9.9492   &     1.6179          & 3.2601       \\
\textbf{Sum\_act} & \textbf{1.0892}       & \textbf{8.3855}  &    \textbf{9.5798}   & \textbf{3.4320}           & \textbf{3.0315}      \\ \hline
\end{tabular}}
\label{tab-cost-1}
\end{table}

Then, we verify the detailed performance in each zone. We compare the temperature violation penalty of each method with the ground-truth result. The zones with large temperature violations are denoted as ``Abnormal" ones, while others are ``Normal" ones. \textcolor{black}{The detailed classification process is presented as:
\begin{subequations}\label{class}
\begin{align}
&\mathcal{U}_{i,x/0} = \mathbb{I}\left(|\frac{u_{i,x/0}-u_{i,0}}{u_{i,0}+\varsigma}|>U\right) \\
&u_{i,x} = \sum\limits_{t \in {\cal T}} P_{t} \cdot y_{i,x,t}^\intercal y_{i,x,t}
\end{align}
\end{subequations}
where $\mathcal{U}$ means the set of ``Normal/Abnormal" cases, with 0 as the ``Normal" and 1 as the ``Abnormal" case; $i\in\mathcal{Z}$ is the zone index and $x$ is the index of all comparisons $\{\text{Ground-truth, OriIden, OriSim, OptIden, OptSim}\}$ in the case study; specifically, the index $0$ is the method of ``Ground-truth"; $\mathbb{I}$ is the indicator function; $u$ is the penalty of temperature violation that is calculated by~\eqref{ori_1}; $\varsigma$ is a minimal value introduced to avoid division by zero; $U$ is the threshold to classify the ``Normal/Abnormal" cases, and its value is specified as 12, balancing the number of ``Abnormal" cases and the actual magnitude of temperature violation penalty.}

The results are presented in Fig.~\ref{fig_zones}, where the proposed methods have less than 10 zones with abnormal violations, but the original variable-related methods have more than 30 zones with abnormal violations. The results demonstrate that the proposed method can effectively regulate the temperature of most zones within comfort levels.
\begin{figure}[t]
\centering
\includegraphics[width=0.48\textwidth]{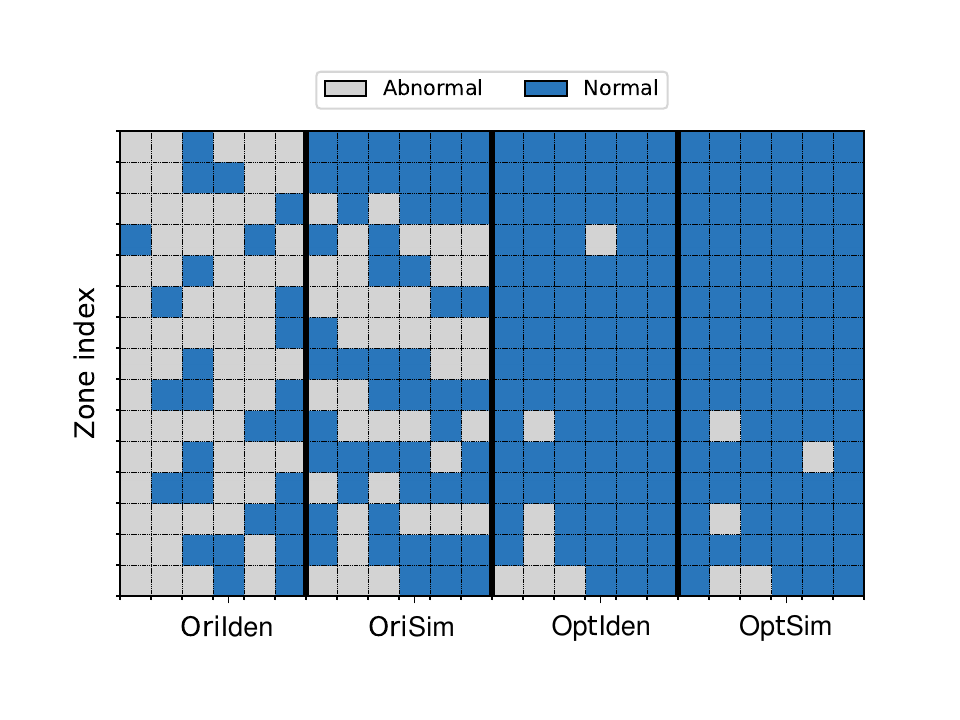}
\caption{Optimization results of comfort violations.}
\label{fig_zones}
\end{figure}

More specifically, we select the first zone in the building to analyze the daily temperature curves. The decision errors of each method between solved and actual results are presented in Fig.~\ref{fig_err1}; the actual temperatures solved by original variables and latent variables are presented in Figs.~\ref{fig_err2} and~\ref{fig_err3}, respectively, \textcolor{black}{where the shaded pink area represents the time-varying temperature comfort range with upper and lower bounds.} It can be found that both the OriSim and OptSim achieve almost zero errors in Fig.~\ref{fig_err1}. OptIden and OptSim in Fig.~\ref{fig_err3} realize smaller temperature violations than OriIden and OriSim in Fig.~\ref{fig_err2}, especially during 6:00-8:00 and 12:00-15:00 during the day.
\begin{figure}[t]
    \centering
    \subfloat[Temperature errors]{\includegraphics[width=0.48\textwidth,height=0.18\textwidth]{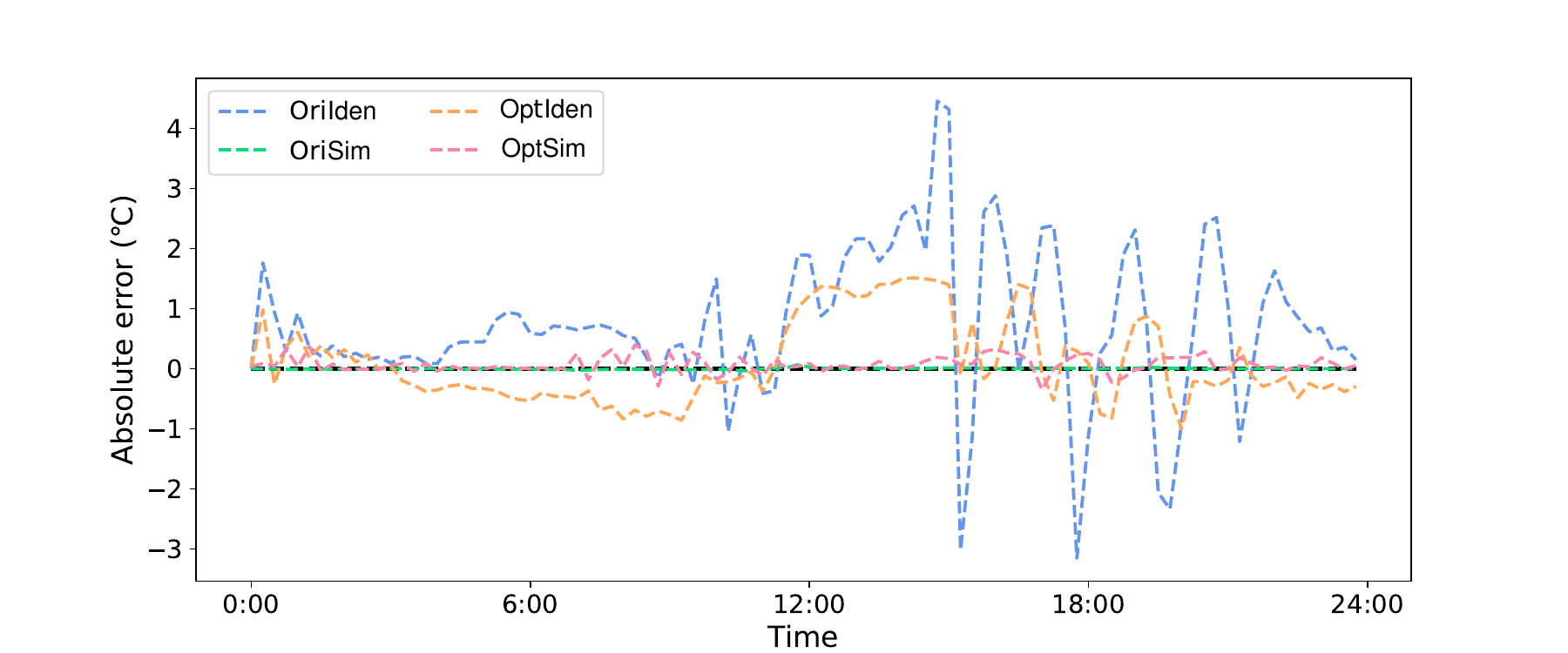}\label{fig_err1}} \\
    \subfloat[\textcolor{black}{Results solved by original variables}]{\includegraphics[width=0.48\textwidth,height=0.18\textwidth]{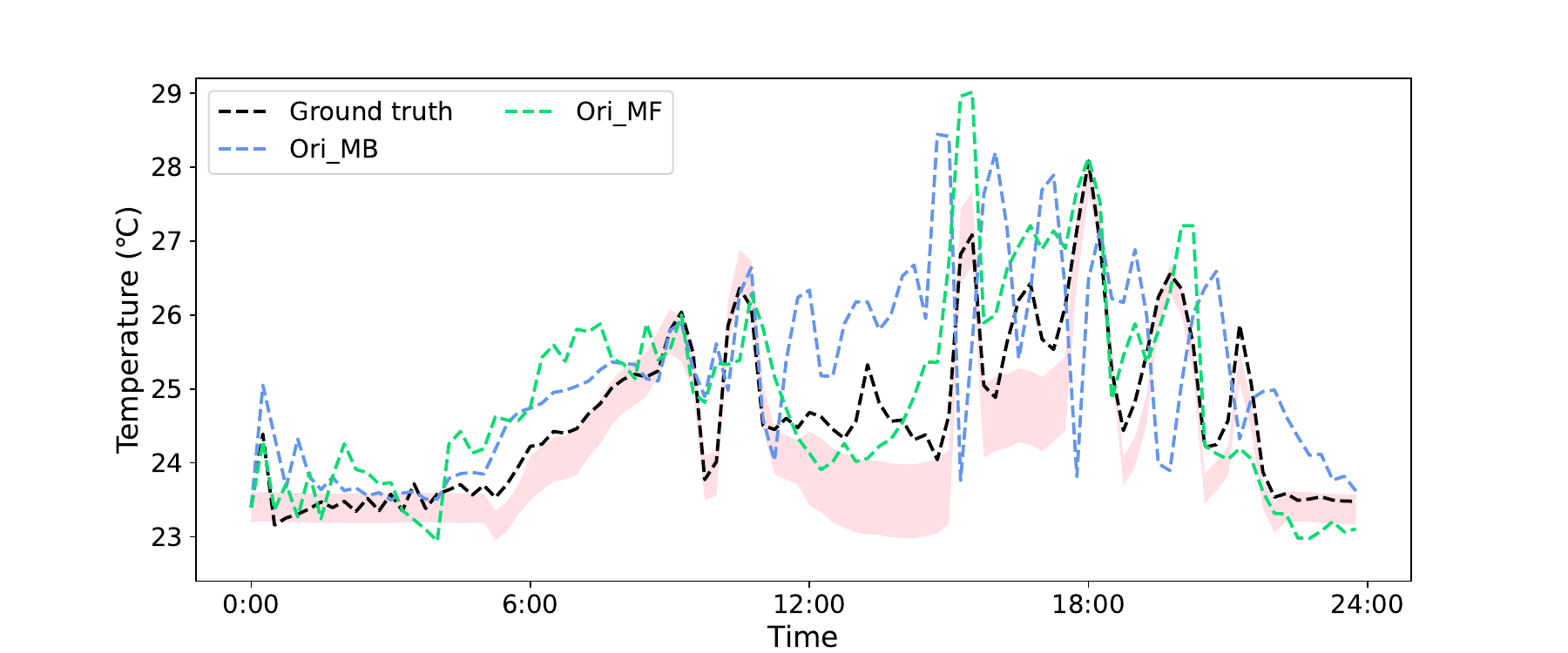}\label{fig_err2}} \\
    \subfloat[\textcolor{black}{Results solved by latent variables}]{\includegraphics[width=0.48\textwidth,height=0.18\textwidth]{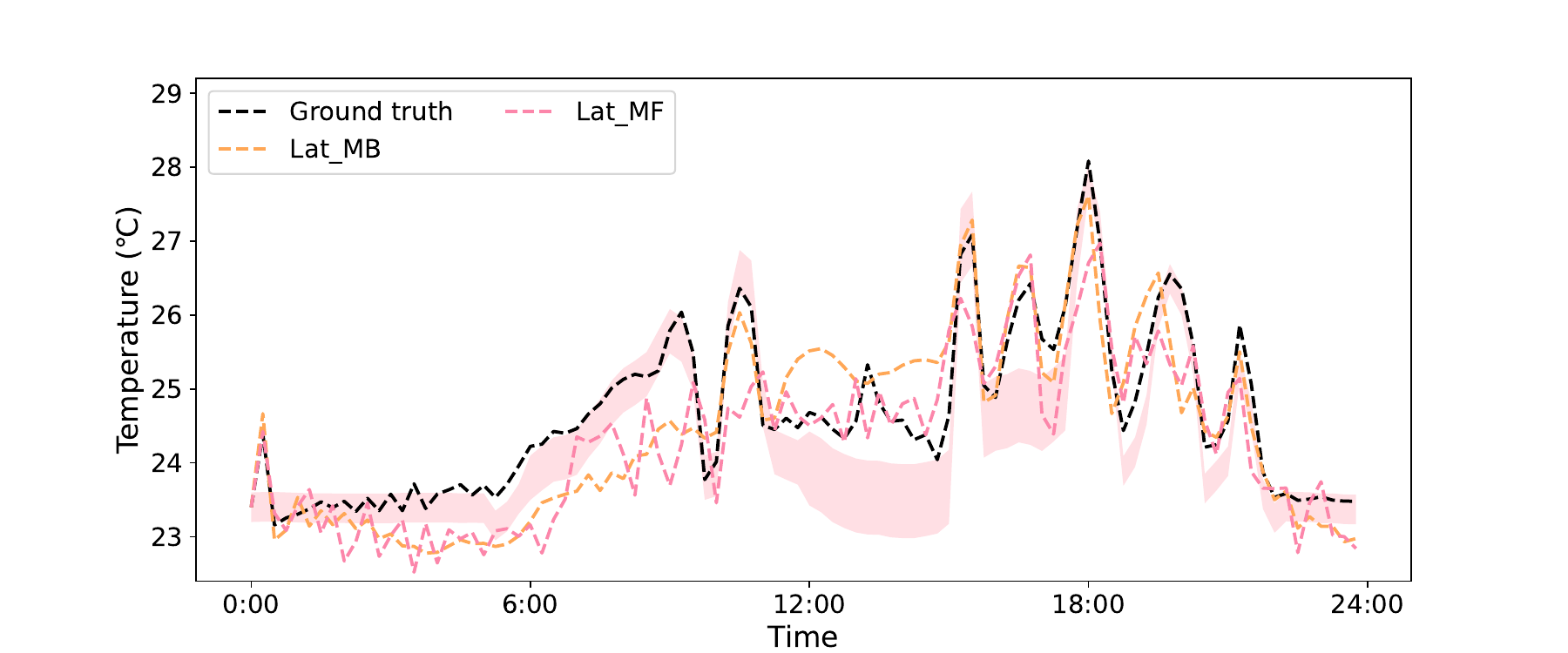}\label{fig_err3}}
    \caption{\textcolor{black}{Temperature curves in the 1st zone on the 1st day.}}
    \label{fig_errors}
\end{figure}

\subsection{Algorithm Performance}
We analyze some key algorithm performance in this subsection, including the computational time, algorithm convergence, and robustness to data noise. \textcolor{black}{Similar to the specific case analysis in Section VI.C, we select the first day in the test period for illustration.}
\subsubsection{Computational time}
Table~\ref{tab-time} shows the computation comparisons. The time of the ground-truth method is the highest due to numerical gradient calculations for each variable  ($|\mathcal{Z}| \cdot |\mathcal{T}|$). The proposed OptIden is around 6,000 times faster than the ground-truth method based on automatic differentiation. Note that the total computational time of OptSim (around 103s) is larger than OptIden (around 7s), mainly because OptSim requires more iterations for convergence without explicit gradient information.
\begin{table}[t]
\caption{Comparison of Computational Time (s)}
\centering
\resizebox{0.95\linewidth}{!}{\begin{tabular}{c|ccccc}
\hline
\textbf{Settings} & \textbf{Ground-truth} & \textbf{OriIden} & \textbf{OriSim} & \textbf{OptIden}& \textbf{OptSim}                \\ \hline
Total time    & 40299.978   & 355.506 & 211.031 & 7.327 & 103.816                \\
Per-iter time    & 822.449   & - & 0.107 & 0.129 & 0.208 \\ \hline
\end{tabular}}
\label{tab-time}
\end{table}

\subsubsection{Convergence} Considering the iterative solving process, we present the iteration losses of OriSim, OptIden, and OptSim in Fig.~\ref{fig_losses}. Benefiting from the calculated gradients, the interactive process of OptIden is the best, converging to less than 2.0 with around 60 iterations. Besides, in OptSim, the zeroth order optimization converges to around 4.0 with 500 iterations. On the contrary, with the same initial setting, the interactive process in OriSim can only reach a loss of 10.0-12.0 with up to 2000 iterations, and it has volatile loss changes during the early stage, resulting in poor convergence. It demonstrates that the zeroth order optimization with latent variables tends to achieve more effective convergence than with original variables.
\begin{figure}[t]
    \centering
    \subfloat[OriSim]{\includegraphics[width=0.48\textwidth,height=0.18\textwidth]{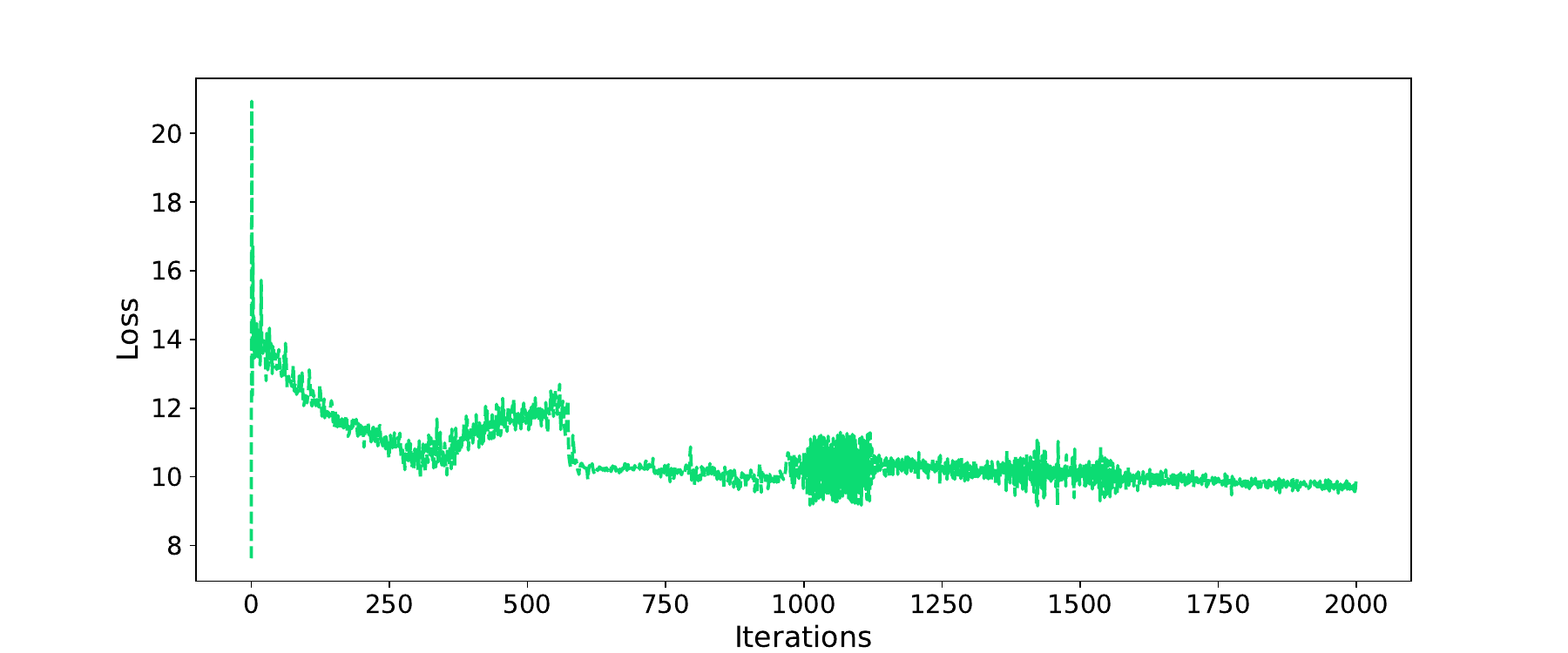}\label{fig_loss0}} \\
    \subfloat[OptIden]{\includegraphics[width=0.48\textwidth,height=0.18\textwidth]{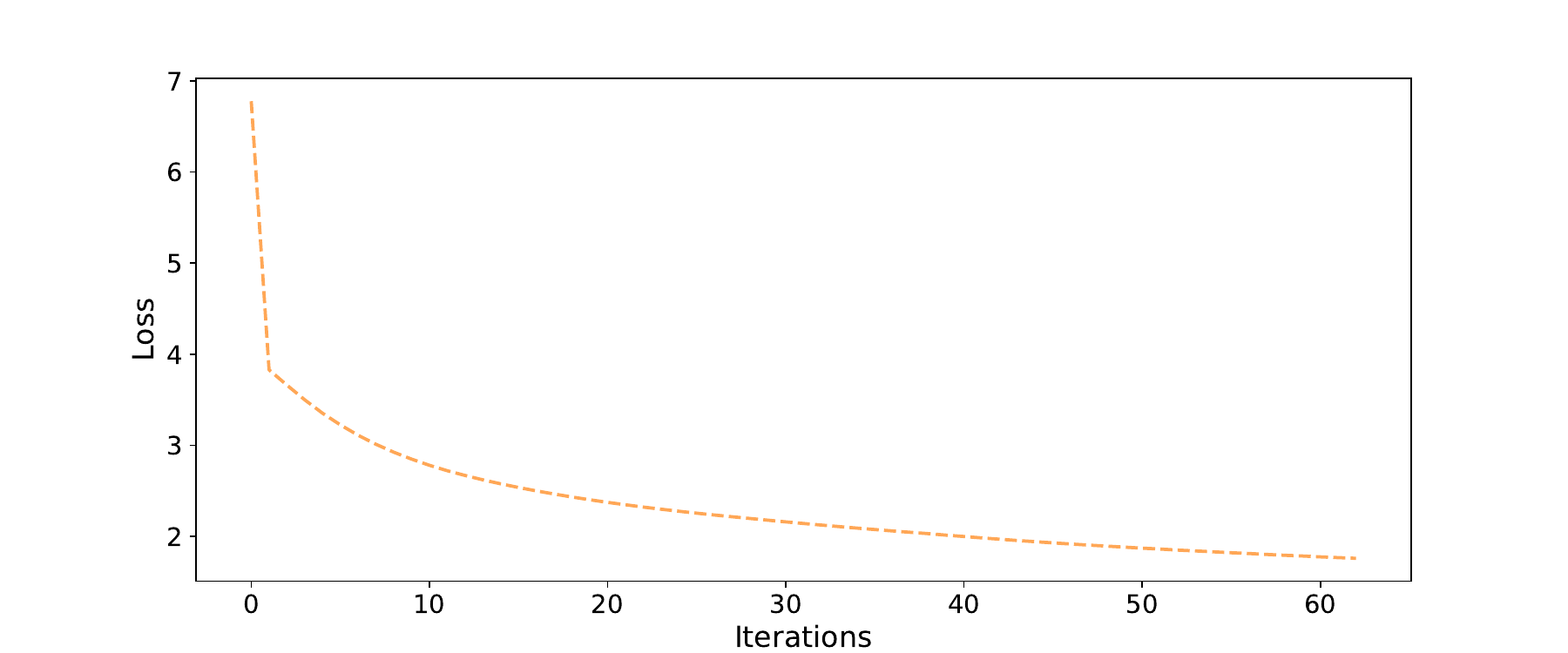}\label{fig_loss1}} \\
    \subfloat[OptSim]{\includegraphics[width=0.48\textwidth,height=0.18\textwidth]{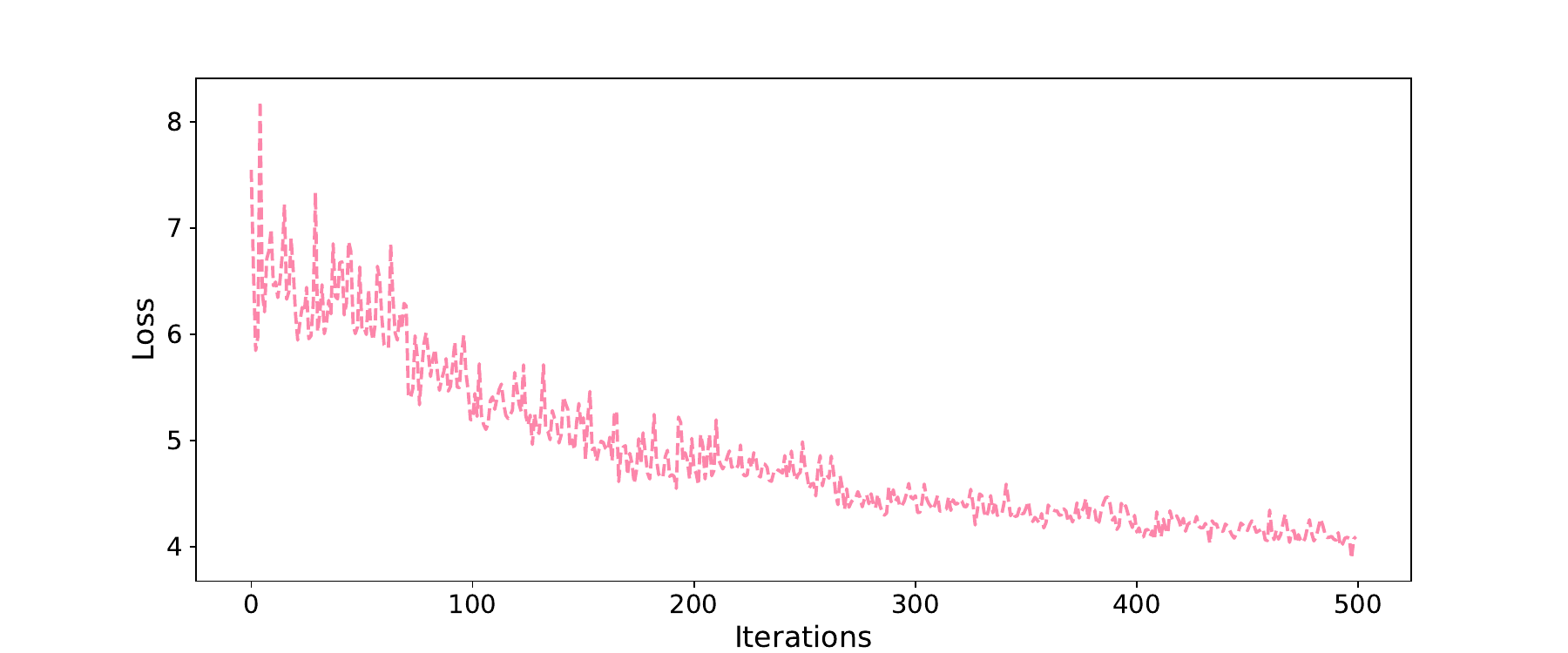}\label{fig_loss2}}
    \caption{Convergence process of different methods.}
    \label{fig_losses}
\end{figure}

\subsubsection{Robustness against noises}
Considering the underlying noises in the measured data for optimization, we add noise in the disturbance data to test the robustness of different methods. It is assumed that the added noises follow the normal distribution $\mathcal{N}$ as
\begin{equation}
\delta' = \delta\cdot(1+\varepsilon),~\varepsilon \sim \mathcal{N}(0,\sigma^2)
\end{equation}
where $\delta'$ and $\delta$ are the noisy and true data, respectively; $\varepsilon$ is the added noise ratio and $\sigma^2$ is the variance.

We set 10 different noise levels where $\sigma$ ranges from 0.1 to 1.0. The actual costs under different noise levels are given in Fig.~\ref{fig_noise}. \textcolor{black}{The calculation of actual costs is the same as the ``Sum\_act" in Table~\ref{tab-cost-1}.} The proposed methods are robust to noisy data because costs are less than 10\$, and most are less than 5\$ under different noise levels. The OriIden is sensitive to noise, and the cost increases to more than 10\$. The OriSim shows irregular changes with different noise levels, influenced by unstable iterations as shown in Fig.~\ref{fig_loss0}.
\begin{figure}[t]
\centering
\includegraphics[width=0.48\textwidth]{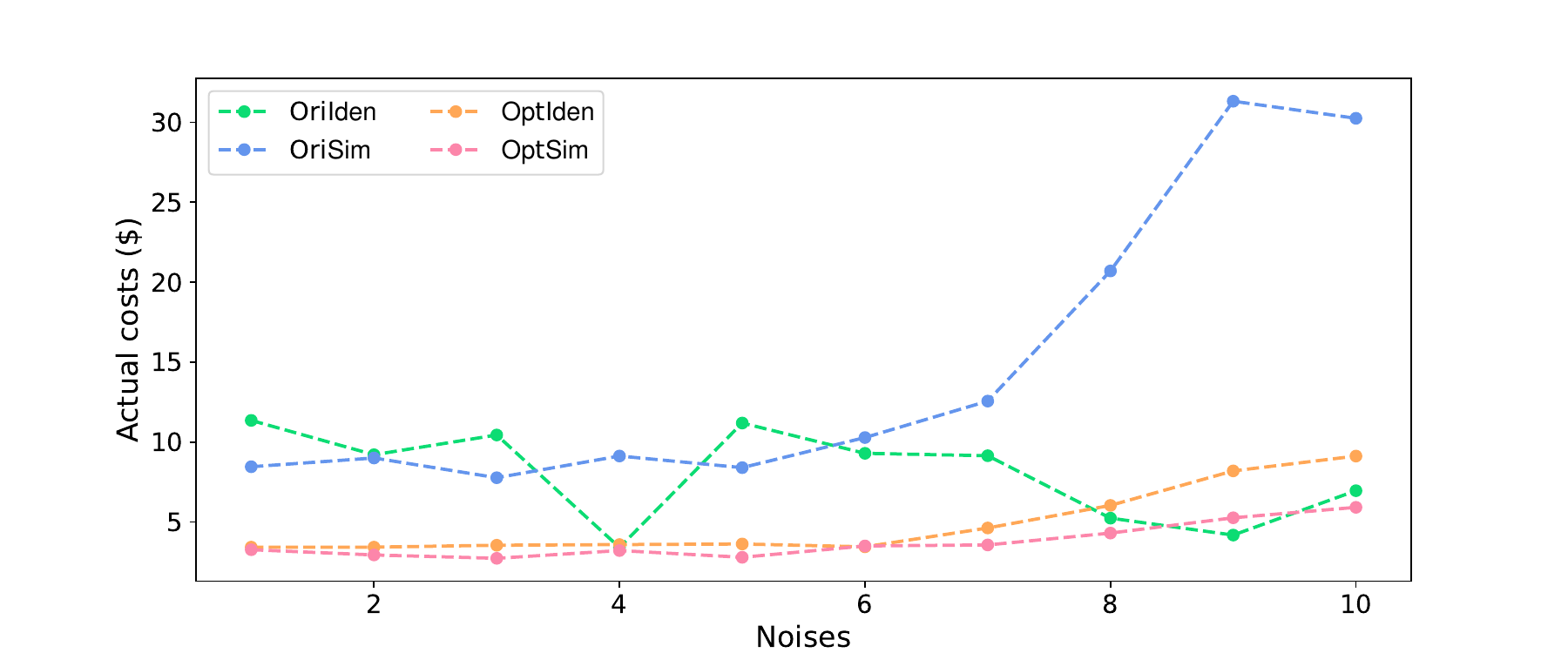}
\caption{Actual costs with noise settings.}
\label{fig_noise}
\end{figure}

\subsection{\textcolor{black}{Scalability Analysis}}

\textcolor{black}{We facilitate the 1080-zone building case to verify the scalability of the proposed method. As shown in Table~\ref{tab-dim-1080}, we represent latent state, action, and disturbance variables with dimensions around 1/300 of the original ones. Moreover, the modeling errors of OriIden and OptIden for the 1080-zone case are compared in Table~\ref{tab-acc-1080}. Compared with OriIden, the proposed approach can reduce RMSEs by around 55\% and 37\% in the training and test sets, respectively. Higher model accuracy of OptIden than OriIden can also be seen in MAE and R2, demonstrating the superiority of representing highly complex dynamics with a much lower dimension of variables.}
\begin{table}[t]
\caption{\textcolor{black}{Dimensions of Original and Latent Variables in the 1080-zone Case}}
\centering
\color{black}
\resizebox{0.85\linewidth}{!}{\begin{tabular}{r|cccc}
\hline
\textbf{Variables per time step} & \textbf{State} & \textbf{Decision} & \textbf{Disturbance} & \textbf{Total} \\ \hline
Original space     & 1080                      & 960                           & 2161                  & 4201            \\
Latent space       & 3                       & 5                            & 7                    & 15             \\ \hline
\end{tabular}}
\label{tab-dim-1080}
\end{table}
\begin{table}[t]
\caption{\textcolor{black}{Model Errors of 1080-zone case}}
\centering
\color{black}
\resizebox{0.95\linewidth}{!}{\begin{tabular}{c|c|c|c|c}
\hline
\textbf{Methods}                         & \textbf{Dataset} & \textbf{RMSE (℃)} & \textbf{MAE (℃)} & \textbf{R2} \\ \hline
\multirow{2}{*}{OriIden}          & Train   & $0.5274\pm0.11$            & $0.3968\pm0.08$          & $0.7413\pm0.04$     \\
                                   & Test    & $0.5605\pm0.11$           & $0.4333\pm0.09$          & $0.6932\pm0.08$ \\ \hline             
\multirow{2}{*}{OptIden} & Train   & $0.2356\pm0.06$           & $0.1565\pm0.04$          & $0.9478\pm0.01$     \\
                                   & Test    & $0.3520\pm0.08$           & $0.2368\pm0.05$          & $0.8813\pm0.03$      \\ \hline
\end{tabular}}
\label{tab-acc-1080}
\end{table}

\textcolor{black}{After representing latent variables, we present day-ahead optimization results for the test period in Fig.~\ref{fig_allday_1080}. Compared with OriIden and OriSim, which utilize the original variables for solving, the proposed OptIden and OptSim have significantly lower costs on most days. More specifically, OriIden and OriSim achieve an average cost of 101.29 and 48.67 during the test period, while the results of OptIden and OptSim are 54.39 and 24.38, respectively. Thus, it confirms that solving the optimization problem with latent variables and simulation models could realize lower costs than with original variables and identification models respectively, even in the much larger building case with 1080 zones.}
\begin{figure*}[t]
\centering
\includegraphics[width=0.99\textwidth]{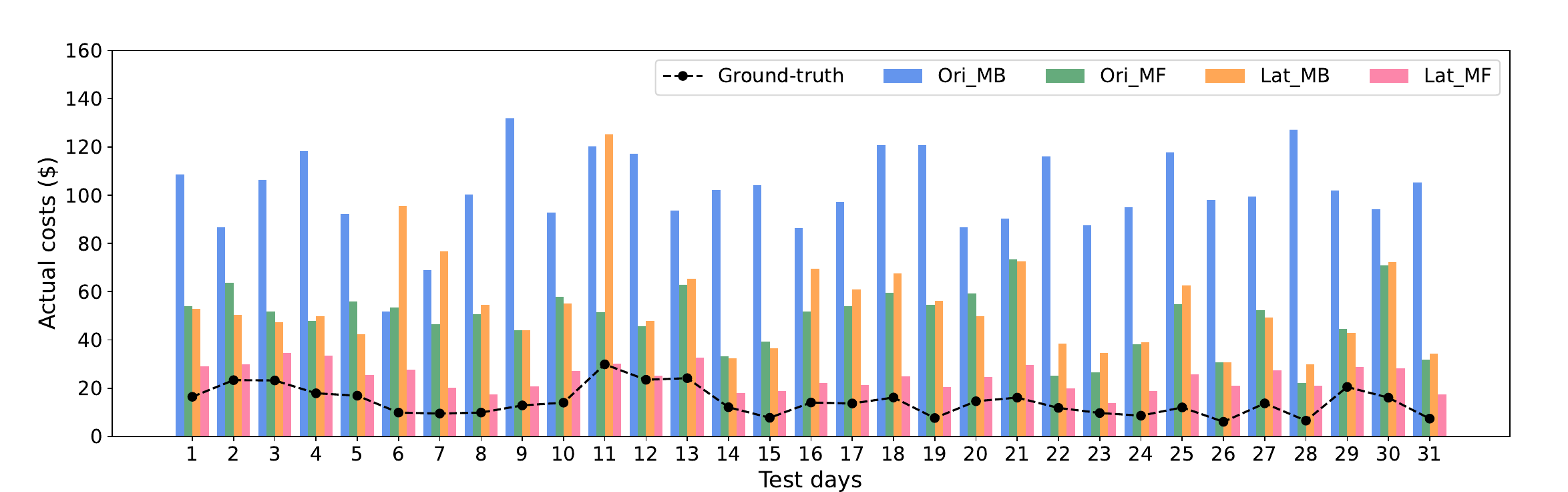}
\caption{\textcolor{black}{Comparison of actual costs in the 1080-zone case.}}
\label{fig_allday_1080}
\end{figure*}

\section{Conclusion}
This work solves a dimension-reduced optimization of multi-zone TCLs with latent variables. A multi-task learning framework is developed to represent latent variables and the time-series model jointly. Latent variables can be incorporated with identification and simulation models to reduce the problem's complexity. The proposed methods leveraging latent variables significantly improve computational efficiency by substantially reducing the number of variables in the optimization process. Additionally, they enhance optimization accuracy by using an auto-encoder to capture the interdependencies between system states effectively. For instance, the computational time of OptIden is approximately 6,000 times faster than the ground-truth method, while the average cost of OptSim is nearly 39\% lower than OriSim.

\textcolor{black}{When utilizing latent variables, the OptSim algorithm could achieve nearly 14\% lower costs by avoiding errors of the identification model, but it is solved with more iterations and adds around 14 times the computational time. As for practical applications, the OptIden is applicable to most buildings only with historical data as inputs, while the OptSim performs well when pursuing accurate control and if the interaction with real buildings is available.}

Several possible research are listed to extend the proposed method: \textcolor{black}{1) Incorporate the representation and modeling process of latent variables within optimization to reduce solving errors in an end-to-end way. 2) Coordinate the proposed OptSim with RL in the complex building environments. For example, the representation of low-dimensional latent variables could serve as compact inputs for RL to improve sampling efficiency in large-scale building systems~\cite{new-rl1}; the zeroth order optimization provides a computationally efficient approach for policy training in RL that could reduce the large number of interactions with the simulation environment~\cite{new-rl2}.}


\bibliographystyle{IEEEtran}
\bibliography{Re1_Ref}

\begin{thebibliography}{10}
\providecommand{\url}[1]{#1}
\csname url@samestyle\endcsname
\providecommand{\newblock}{\relax}
\providecommand{\bibinfo}[2]{#2}
\providecommand{\BIBentrySTDinterwordspacing}{\spaceskip=0pt\relax}
\providecommand{\BIBentryALTinterwordstretchfactor}{4}
\providecommand{\BIBentryALTinterwordspacing}{\spaceskip=\fontdimen2\font plus
\BIBentryALTinterwordstretchfactor\fontdimen3\font minus \fontdimen4\font\relax}
\providecommand{\BIBforeignlanguage}[2]{{%
\expandafter\ifx\csname l@#1\endcsname\relax
\typeout{** WARNING: IEEEtran.bst: No hyphenation pattern has been}%
\typeout{** loaded for the language `#1'. Using the pattern for}%
\typeout{** the default language instead.}%
\else
\language=\csname l@#1\endcsname
\fi
#2}}
\providecommand{\BIBdecl}{\relax}
\BIBdecl

\bibitem{0-1-energy1}
{U.S. Energy Information Administration}, ``Energy use in commercial buildings,'' [Online], \url{https://www.eia.gov/energyexplained/use-of-energy/commercial-buildings.php}.

\bibitem{0-1-TCL}
X.~Liu, X.~Liu, Y.~Jiang, T.~Zhang, and B.~Hao, ``Photovoltaics and energy storage integrated flexible direct current distribution systems of buildings: Definition, technology review, and application,'' \emph{CSEE Journal of Power and Energy Systems}, vol.~9, no.~3, pp. 829--845, 2023.

\bibitem{0-zone}
{California Energy Commission}, ``Thermal zones, {HVAC} zones and space functions,'' [Online], \url{https://energycodeace.com/site/custom/public/reference-ace-2013/index.html#!Documents/22thermalzoneshvaczonesandspacefunctions.htm}.

\bibitem{0-2-thermal}
B.~He, N.~Zhang, C.~Fang, Y.~Su, and Y.~Wang, ``Flexible building energy management with neural odes-based model predictive control,'' \emph{IEEE Transactions on Smart Grid}, vol.~15, no.~5, pp. 4690--4704, 2024.

\bibitem{0-rule}
J.~Drgo{\v{n}}a, J.~Arroyo, I.~C. Figueroa, D.~Blum, K.~Arendt, D.~Kim, E.~P. Oll{\'e}, J.~Oravec, M.~Wetter, D.~L. Vrabie \emph{et~al.}, ``All you need to know about model predictive control for buildings,'' \emph{Annual Reviews in Control}, vol.~50, pp. 190--232, 2020.

\bibitem{0-2-type-commerical-RL}
L.~Yu, Y.~Sun, Z.~Xu, C.~Shen, D.~Yue, T.~Jiang, and X.~Guan, ``Multi-agent deep reinforcement learning for hvac control in commercial buildings,'' \emph{IEEE Transactions on Smart Grid}, vol.~12, no.~1, pp. 407--419, 2021.

\bibitem{latent-model}
W.~D. Fries, X.~He, and Y.~Choi, ``Lasdi: Parametric latent space dynamics identification,'' \emph{Computer Methods in Applied Mechanics and Engineering}, vol. 399, p. 115436, 2022.

\bibitem{1-1-fast}
Q.~Chen and N.~Li, ``Fast simulation and high-fidelity reduced-order model of the multi-zone radiant floor system for efficient application to model predictive control,'' \emph{Energy and Buildings}, vol. 248, p. 111210, 2021.

\bibitem{1-1-2012}
S.~Goyal and P.~Barooah, ``A method for model-reduction of non-linear thermal dynamics of multi-zone buildings,'' \emph{Energy and Buildings}, vol.~47, pp. 332--340, 2012.

\bibitem{1-1-other-review}
T.~Hou, S.~Roels, and H.~Janssen, ``Model order reduction for efficient deterministic and probabilistic assessment of building envelope thermal performance,'' \emph{Energy and Buildings}, vol. 226, p. 110366, 2020.

\bibitem{1-2-lump}
J.~Joe, P.~Im, B.~Cui, and J.~Dong, ``Model-based predictive control of multi-zone commercial building with a lumped building modelling approach,'' \emph{Energy}, vol. 263, p. 125494, 2023.

\bibitem{1-2-Lu1}
S.~Lu, W.~Gu, S.~Ding, S.~Yao, H.~Lu, and X.~Yuan, ``Data-driven aggregate thermal dynamic model for buildings: {A} regression approach,'' \emph{IEEE Transactions on Smart Grid}, vol.~13, no.~1, pp. 227--242, 2021.

\bibitem{1-2-Lu2}
Z.~Hou, S.~Lu, Y.~Xu, H.~Qiu, W.~Gu, Z.~Dong, and S.~Ding, ``Privacy-preserved aggregate thermal dynamic model of buildings,'' \emph{IEEE Transactions on Smart Grid}, 2024.

\bibitem{1-2-aggregation}
Z.~Guo, A.~R. Coffman, J.~Munk, P.~Im, T.~Kuruganti, and P.~Barooah, ``Aggregation and data driven identification of building thermal dynamic model and unmeasured disturbance,'' \emph{Energy and Buildings}, vol. 231, p. 110500, 2021.

\bibitem{new-2}
H.~Gong, E.~S. Jones, R.~E. Alden, A.~G. Frye, D.~Colliver, and D.~M. Ionel, ``Virtual power plant control for large residential communities using {HVAC} systems for energy storage,'' \emph{IEEE Transactions on Industry Applications}, vol.~58, no.~1, pp. 622--633, 2021.

\bibitem{new-1}
I.~M. Granitsas, O.~E. Oyefeso, G.~S. Ledva, S.~A. Mock, S.~R. Hinson, I.~A. Hiskens, and J.~L. Mathieu, ``Controlling air conditioners for frequency regulation: A real-world example,'' \emph{IEEE Transactions on Smart Grid}, 2024.

\bibitem{new-3}
S.~C{\'o}rdova, C.~A. Ca{\~n}izares, {\'A}.~Lorca, and D.~E. Olivares, ``Aggregate modeling of thermostatically controlled loads for microgrid energy management systems,'' \emph{IEEE Transactions on Smart Grid}, vol.~14, no.~6, pp. 4169--4181, 2023.

\bibitem{new-5}
I.~Jendoubi, K.~Sheshyekani, and H.~Dagdougui, ``Aggregation and optimal management of tcls for frequency and voltage control of a microgrid,'' \emph{IEEE Transactions on Power Delivery}, vol.~36, no.~4, pp. 2085--2096, 2021.

\bibitem{new-4}
A.~Abiri-Jahromi and F.~Bouffard, ``Contingency-type reserve leveraged through aggregated thermostatically-controlled loads—{P}art {I}: Characterization and control,'' \emph{IEEE Transactions on Power Systems}, vol.~31, no.~3, pp. 1972--1980, 2016.

\bibitem{0-2-type-office}
Y.-J. Kim, ``A supervised-learning-based strategy for optimal demand response of an hvac system in a multi-zone office building,'' \emph{IEEE Transactions on Smart Grid}, vol.~11, no.~5, pp. 4212--4226, 2020.

\bibitem{0-2-type-commerical}
L.~Yu, D.~Xie, C.~Huang, T.~Jiang, and Y.~Zou, ``Energy optimization of hvac systems in commercial buildings considering indoor air quality management,'' \emph{IEEE Transactions on Smart Grid}, vol.~10, no.~5, pp. 5103--5113, 2019.

\bibitem{1-1-RCmodel}
M.~K. Muthalib and C.~O. Nwankpa, ``Physically-based building load model for electric grid operation and planning,'' \emph{IEEE Transactions on Smart Grid}, vol.~8, no.~1, pp. 169--177, 2016.

\bibitem{1-3-hierarch}
N.~S. Raman, R.~U. Chaturvedi, Z.~Guo, and P.~Barooah, ``Model predictive control-based hierarchical control of a multi-zone commercial hvac system,'' \emph{Journal of Engineering for Sustainable Buildings and Cities}, vol.~2, no.~2, p. 021005, 2021.

\bibitem{1-3-distributed}
M.~Mork, A.~Xhonneux, and D.~M{\"u}ller, ``Nonlinear distributed model predictive control for multi-zone building energy systems,'' \emph{Energy and Buildings}, vol. 264, p. 112066, 2022.

\bibitem{0-2-complex-model}
X.~Cui, J.-F. Toubeau, F.~Vall{\'e}e, and Y.~Wang, ``Decision-oriented modeling of thermal dynamics within buildings,'' \emph{IEEE Transactions on Smart Grid}, 2024.

\bibitem{chemE-1}
C.~Tsay and M.~Baldea, ``Integrating production scheduling and process control using latent variable dynamic models,'' \emph{Control Engineering Practice}, vol.~94, p. 104201, 2020.

\bibitem{autograd}
``Automatic differentiation package,'' [Online], \url{https://pytorch.org/docs/stable/autograd.html}.

\bibitem{Zeroth-method}
X.~Chen, Y.~Tang, and N.~Li, ``Improve single-point zeroth-order optimization using high-pass and low-pass filters,'' in \emph{International Conference on Machine Learning}.\hskip 1em plus 0.5em minus 0.4em\relax PMLR, 2022, pp. 3603--3620.

\bibitem{case-buildings}
``Prototype building models,'' [Online], \url{https://www.energycodes.gov/prototype-building-models}.

\bibitem{case-price}
``Energy market \& operational data, 2023,'' [Online], \url{https://www.nyiso.com/energy-market-operational-data}.

\bibitem{gurobi}
B.~Bixby, ``The gurobi optimizer,'' \emph{Transp. Re-search Part B}, vol.~41, no.~2, pp. 159--178, 2007.

\bibitem{new-vpp}
X.~Cui, S.~Liu, G.~Ruan, and Y.~Wang, ``Data-driven aggregation of thermal dynamics within building virtual power plants,'' \emph{Applied Energy}, vol. 353, p. 122126, 2024.

\bibitem{new-rl1}
T.~Sadamoto, A.~Chakrabortty, and J.-i. Imura, ``Fast online reinforcement learning control using state-space dimensionality reduction,'' \emph{IEEE Transactions on Control of Network Systems}, vol.~8, no.~1, pp. 342--353, 2020.

\bibitem{new-rl2}
Y.~Lei, Y.~Lyu, G.~Zhan, T.~Zhang, J.~Li, J.~Chen, S.~E. Li, and S.~Zheng, ``Zeroth-order actor-critic: An evolutionary framework for sequential decision problems,'' \emph{IEEE Transactions on Evolutionary Computation}, 2025.

\end{thebibliography}
\end{document}